\pdfoutput=1
%
\documentclass[runningheads]{llncs}
\usepackage[T1]{fontenc}
%
\usepackage{graphicx}
\usepackage{xcolor}
\usepackage{amsmath}
\usepackage{amssymb}
\usepackage{enumitem}
\usepackage[frozencache,cachedir=.]{minted}
\usepackage{algorithm}
\usepackage[noend]{algpseudocode}
\usepackage{xspace}
\usepackage{tabularx}
\usepackage{booktabs} 
\usepackage{multirow}
\usepackage{listings}
\usepackage{caption}
\usepackage{subcaption}

\usepackage{hyperref}
\algrenewcommand\algorithmicindent{1.0em}

%
%
\newcommand{\sayan}[1]{\textcolor{blue}{#1}}
\newcommand{\yangge}[1]{\textcolor{black}{#1}}

\newcommand{\scenario}{{SC}}
\newcommand{\agent}{\mathcal{A}}
\newcommand{\map}{\mathcal{M}}
\newcommand{\sensor}{\mathcal{S}}
\newcommand{\mapvert}{{V}}
\newcommand{\maplabel}{{L}\xspace}
\newcommand{\maplabcurve}{{g}\xspace}
\newcommand{\maplablab}{{h}\xspace}
\newcommand{\mapcurve}{{\omega}\xspace}
\newcommand{\mapsetcurve}{{\Omega}\xspace}

\newcommand{\strategy}{{P}\xspace}

\newcommand{\Zagent}{{Z}}

\newcommand{\Xagent}{{X}}
\newcommand{\xagent}{{x}}
\newcommand{\Dagent}{{D}}
\newcommand{\dagent}{{d}}

\newcommand{\agentflow}{{F}}

\newcommand{\agentfullstate}{{Y}}
\newcommand{\workspace}{{W}}

\newcommand{\ourtool}{{Verse}\xspace}

\newcommand{\haguard}{\textbf{G}}
\newcommand{\hareset}{\textbf{R}}
\newcommand{\ha}{{H}}
\newcommand{\reals}{{\mathbb{R}}}

\newcommand{\Dha}{\textbf{D}\xspace}
\newcommand{\dha}{\textbf{d}\xspace}
\newcommand{\Xha}{\textbf{X}}
\newcommand{\Xhaset}{\textbf{S}}
\newcommand{\xha}{\textbf{x}}
\newcommand{\Tha}{\xi}
\newcommand{\transvert}{\textbf{V}}

\newcommand{\tl}{\textbf{TL}}

\newcommand{\exec}{{\alpha}}
\newcommand{\reach}[1]{{Reach_{\ha_{#1}}}\xspace}
\newcommand{\unsafe}{\textbf{U}}

\newcommand{\agentguard}{{G}\xspace}
\newcommand{\agentreset}{{R}\xspace}

\newcommand{\post}{{post}\xspace}
\newcommand{\postcont}{\texttt{postCont}\xspace}
\newcommand{\postdisc}{\texttt{postDisc}\xspace}

\newcommand{\exectree}{{Tree}\xspace}
\newcommand{\treevert}{{V}\xspace}
\newcommand{\treeedge}{{E}\xspace}

\newcommand{\guardcache}{{C_g}\xspace}
\newcommand{\flowcache}{{C_f}\xspace}
\newcommand{\unknown}{{\texttt{unknown}}\xspace}
\newcommand{\sat}{{\texttt{sat}}\xspace}
\newcommand{\unsat}{{\texttt{unsat}}\xspace}

\newcommand{\none}{\texttt{none}\xspace}
\newcommand{\treedepth}{{m}\xspace}
\newcommand{\algveri}{{\texttt{verify}}\xspace}
\newcommand{\alginc}{{\texttt{verifyInc}}\xspace}
\newcommand{\algfcache}{{\texttt{flowCache}}\xspace}

\newcommand{\sd}[1]{{\langle \Xhaset_{#1},\dha_{#1}\rangle}\xspace}
\newcommand{\sdp}[1]{{\langle \Xhaset_{#1}',\dha_{#1}'\rangle}\xspace}
\newcommand{\sdpp}[1]{{\langle \Xhaset_{#1}'',\dha_{#1}''\rangle}\xspace}

\newcommand{\verify}{\texttt{verify}\xspace}
\newcommand{\simulate}{\texttt{simulate}\xspace}
\newcommand{\assert}{\texttt{assert}\xspace}

\definecolor{bg}{rgb}{0.95,0.95,0.95}

\newcounter{code}

\usepackage{newfloat}

\DeclareFloatingEnvironment[
  fileext = lop ,
  listname = {Code Snippet} ,
  within=none,
  name = Code Snippet,
  placement = h
]{codesnip}

\begin{document}

\title{Verse: A Python library for reasoning about multi-agent hybrid system scenarios}
%
%
\author{Yangge Li\inst{1} \and
Haoqing Zhu\inst{1} \and
Katherine Braught\inst{1}\and 
Keyi Shen\inst{1}\and
Sayan Mitra\inst{1}}

\authorrunning{}

\institute{Coordinated Science Laboratory\\ University of Illinois at Urbana-Champaign \\ 
\email{\{li213, haoqing3, braught2, keyis2, mitras\}@illinois.edu}}
\maketitle              
\begin{abstract}
We present the  \ourtool library with the aim of making hybrid system verification  more usable for multi-agent scenarios. 
In \ourtool, decision making agents move in a map and interact with each other through sensors. The decision logic for each agent is written in a subset of Python and the continuous dynamics is given by a black-box simulator. 
Multiple agents can be instantiated and they can be ported to different maps for creating scenarios. 
\ourtool provides functions for  simulating and verifying such scenarios using existing reachability analysis algorithms. 
 We illustrate several capabilities and use cases of the library with heterogeneous agents, incremental verification, different sensor models, and the flexibility of plugging in different subroutines for post computations. 
\keywords{Scenario verification  \and Reachability analysis \and Hybrid Systems.}
\end{abstract}

\section{Introduction}
\label{sec:intro}
Hybrid system verification tools have been used to analyze linear  models with  thousands of continuous dimensions~\cite{bak2017hylaa,10.1145/3302504.3311792,Althoff2015a} and  nonlinear models inspired by industrial applications~\cite{10.1145/3302504.3311792,dryvr}. 
Chen and Sankaranarayanan provide a survey of the state of the art~\cite{10.1007/978-3-031-06773-0_6}.
Despite the large potential user base,  current usage of this  technology remains concentrated within the formal methods community. 
%
We conjecture that usability is one of the key barriers. 
Most hybrid verification tools~\cite{spaceEx,10.1007/978-3-319-26287-1_1,flow,bak2017hylaa,Althoff2015a}
require the input model to be written in a  tool-specific  language. 
Tools like C2E2~\cite{FanQM0D16} attempt to translate a subclass of models from  the popular Simulink/Stateflow  framework, 
but the  language-barrier goes deeper than syntax. The verification algorithms are based on variants of the hybrid automaton~\cite{alur95algorithmic,HENZINGER199894,TIOAmon} which require the discrete states (or {\em modes}) to be spelled out  explicitly as a graph, with guards and resets labeling the edges. 

In contrast,  the code for {\em simulating\/} a multi-agent scenario would be written in an expressive programming language. Each agent will have  a decision logic and some continuous dynamics. A complex scenario would be composed by putting together a collection of  agents; it  may use a  {\em map\/} which brings additional structure and constraints to the agent's decisions and interactions. Describing or translating such scenarios for hybrid verification is a  far cry from the capabilities of current tools. 

In this paper, we present \ourtool\footnote{We will make the tool available for artifact evaluations. We omitted the online link in this submission to avoid compromising the double blind review requirements.}, a Python library that aims to make hybrid technologies more usable for multi-agent scenarios. An agent's decision logic is  written in an expressive subset of Python (See Fig.~\ref{fig:controller}). This program can access the relevant features of the map  and parts of the states of the other agents. 
The  continuous dynamics of an agent has to be supplied as a black-box simulation function.
Multiple agents with different dynamics and different decision logics are instantiated to create a \ourtool scenario.

%

\ourtool provides functions for performing systematic simulation and verification of such scenarios through reachability analysis. An agent's decision logic can allow nondeterministic choices. If multiple Python \texttt{if} conditions are satisfied at a given state, then both branches are explored. 
%
For example, Fig.~\ref{fig:ndsim}({\em center}) shows simulations in which the red drone on track \texttt{T1} nears the blue and nondeterministically switches to tracks \texttt{T0} and \texttt{T2}. Similarly, the \verify function propagates uncertainty in the initial states through branches. Safety requirements written with {\assert} can be checked via reachability analysis. 
 While the functions for traversal and $\post$ computation are based on known algorithms, new ones can be implemented and the library vastly simplifies  specification of hybrid multi-agent scenarios. 
 %

The key  concept  that makes the above functionalities tractable and  specifications expressive is the {\em  map\/} abstraction. A \ourtool map  defines a set of {\em tracks\/} that agents can follow.  While a map may have infinitely many tracks, they fall in a finite number of {\em track modes\/}. For example, in  Fig.~\ref{fig:fig8map} each layer in the map is assigned to a track mode \yangge{(\texttt{T0-2})} and the tracks between each pair of layers are also assigned to a track mode (\texttt{M10}, \texttt{M01} etc.). Further, when an agent makes a decision and changes its internal mode (called {\em tactical mode}, which is different from the track mode), for example from \texttt{Normal} to \texttt{MoveUp}, the map object  determines the new track mode for the agent. For an agent on track \texttt{T1}, its new track mode will be \texttt{M10}. This map abstraction allows portability of an agent's decision logic across different maps as long as they have the same interface or track modes. 
It makes \ourtool suited for  multi-agent scenarios, arising in  motion control~\cite{netcontrol:Mobihoc04},  air-traffic management~\cite{utm_conops}, and the study of tactical collision avoidance~\cite{manfredi2016introduction}. 

In summary, the main contributions of this paper are:
 (1) an expressive framework for specifying and simulating  multi-agent hybrid scenarios on interesting maps (Section~\ref{sec:scenarios}),
(2) a library of powerful functions for building verification algorithms based on reachability analysis (Section~\ref{sec:verification_algorithm}), and
(3) illustrations of several capabilities and use cases of the library with heterogeneous agents, incremental verification, 
different sensor models, 
and  the flexibility of plugging in  different subroutines for {\em post\/} computations (Section~\ref{sec:exp}).


\paragraph{Related work.} 
There are a number of powerful tools for creating,  simulating, and testing complex multi-agent scenarios~\cite{scenic,Wu2017FlowAA,brittain2022aamgym,GRAICrace}.
For instance, Scenic \cite{scenic} uses a probabilistic programming language for  guided sampling, simulations, and falsification, Flow~\cite{Wu2017FlowAA} integrates the SUMO~\cite{SUMO2018} traffic simulator for  reinforcement learning, AAM-GYM~\cite{brittain2022aamgym} can generate and simulate scenarios for testing AI algorithms in advanced air mobility, and GRAIC~\cite{GRAICrace} has been used for testing racing controllers in dynamic environments. While the models created in these  tools can be  very flexible and expressive, they are not readily amenable to formal verification. 

Interactive theorem provers have been used for modeling and verification of multi-agent, distributed and hybrid systems~\cite{10.1007/978-3-030-90870-6_20,10.1007/978-3-319-21401-6_36,LKLM:formats05,OLVECZKY2002359}. 
Most notably \yangge{KeYmeraX \cite{10.1007/978-3-319-21401-6_36}} uses quantified differential dynamic logic for specifying multi-agent scenarios and supports speculative proof search and user defined tactics. Isabelle/HOL~\cite{10.1007/978-3-030-90870-6_20}, PVS~\cite{LKLM:formats05}, and Maude~\cite{OLVECZKY2002359} have also been used for limited classes of hybrid systems. These approaches require significant levels of user expertise to interact with. 

This work is closest to the tool presented in~\cite{sibai-tacas-2020}, which also supports multiple agents and reachability analysis. However, \ourtool significantly improves usability by allowing (a) Python for the  decision logics and (b) complex maps. 
The theoretical ideas of Sibai et al.~\cite{sibai2021scenechecker,sibai-tacas-2020}, in exploiting symmetries and caching are complementary to our contributions and could indeed be incorporated to improve the verification algorithms in \ourtool.



\section{Overview of \ourtool}
\label{sec:example}

 We will  highlight the key features of \ourtool with an example. 
Consider two drones flying along three parallel figure-eight tracks that are vertically separated in space (shown by black lines in  Fig.~\ref{fig:fig8map}). Each drone has a simple collision avoidance logic: if it gets too close to another drone on the same track, then it  switches to either the track above or the one below. A drone  on \texttt{T1} has both choices. 
%
%
\ourtool enables creation, simulation, and verification of such scenarios using \yangge{Python, and} provides a collection of powerful functions for building new analysis algorithms for such scenarios.  

\begin{figure}[t]
     \begin{minipage}{0.33\textwidth}
        \centering
        \includegraphics[width=\textwidth,height=2cm]{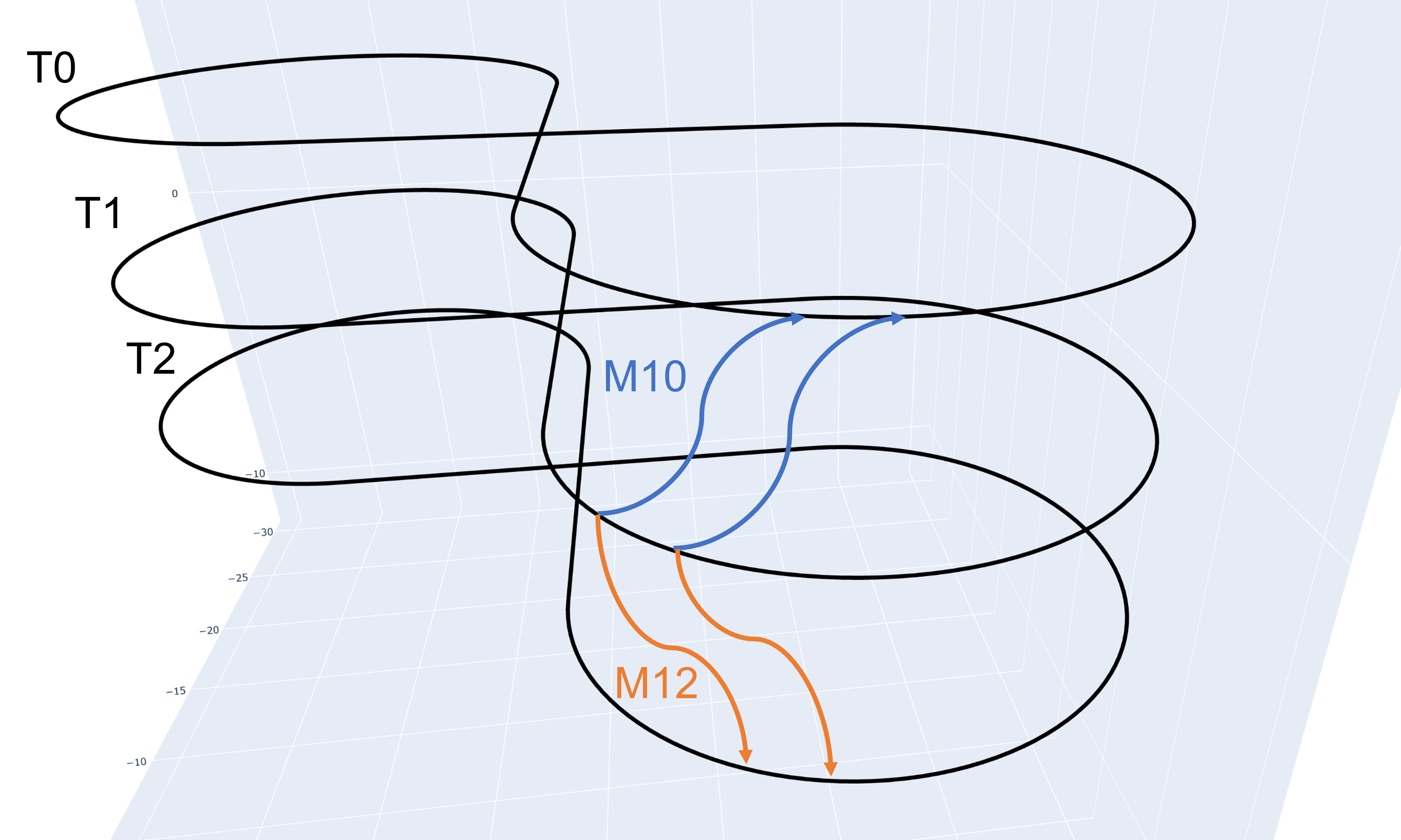}
    \end{minipage}\hfill
    \begin{minipage}{0.33\textwidth}
        \centering
        \includegraphics[width=\textwidth, trim=30 30 0 18,clip,height=2cm]{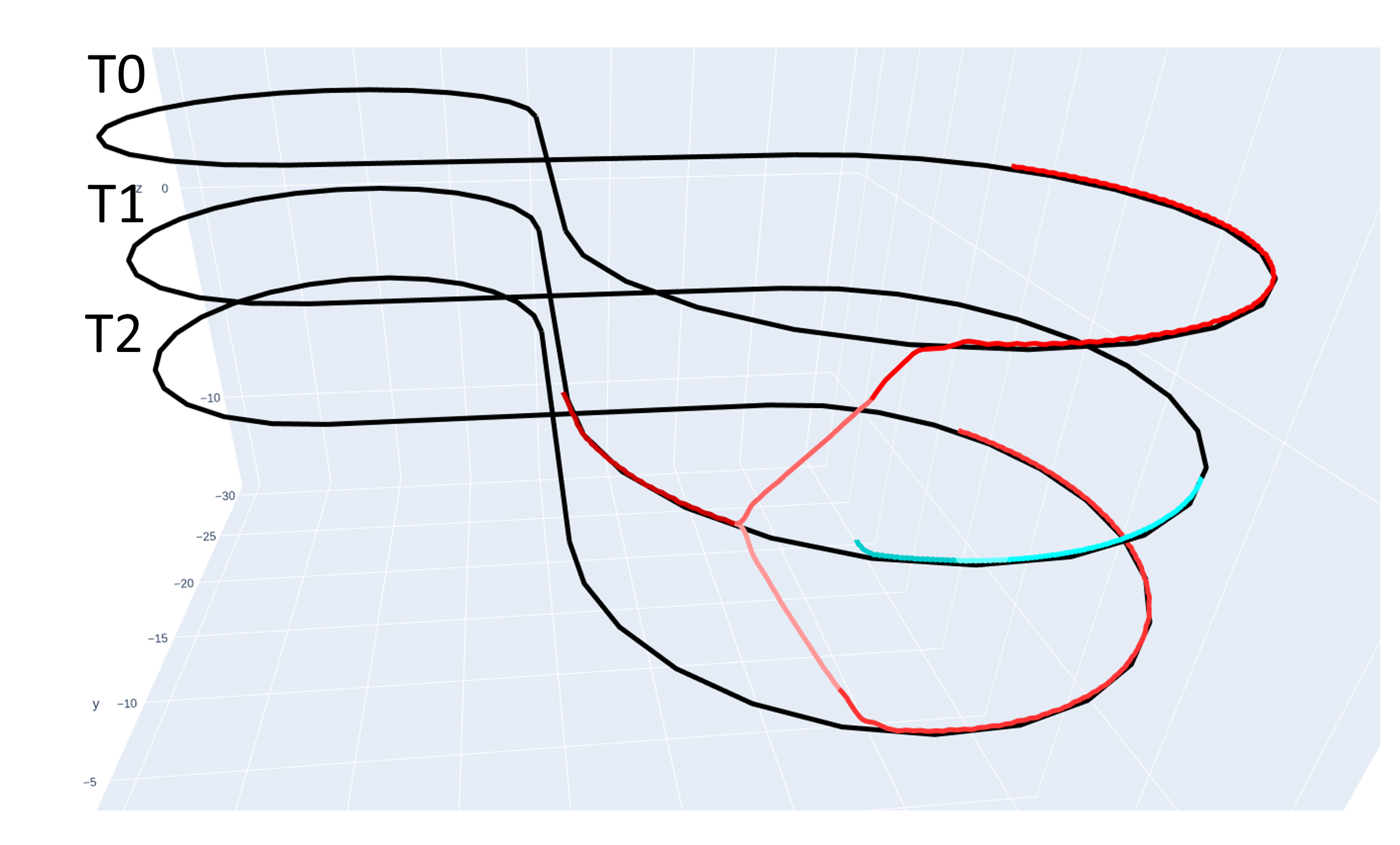}
    \end{minipage}\hfill
     \begin{minipage}{0.33\textwidth}
        \centering
            \includegraphics[width=\textwidth, trim=0 25 0 50,clip,height=2cm]{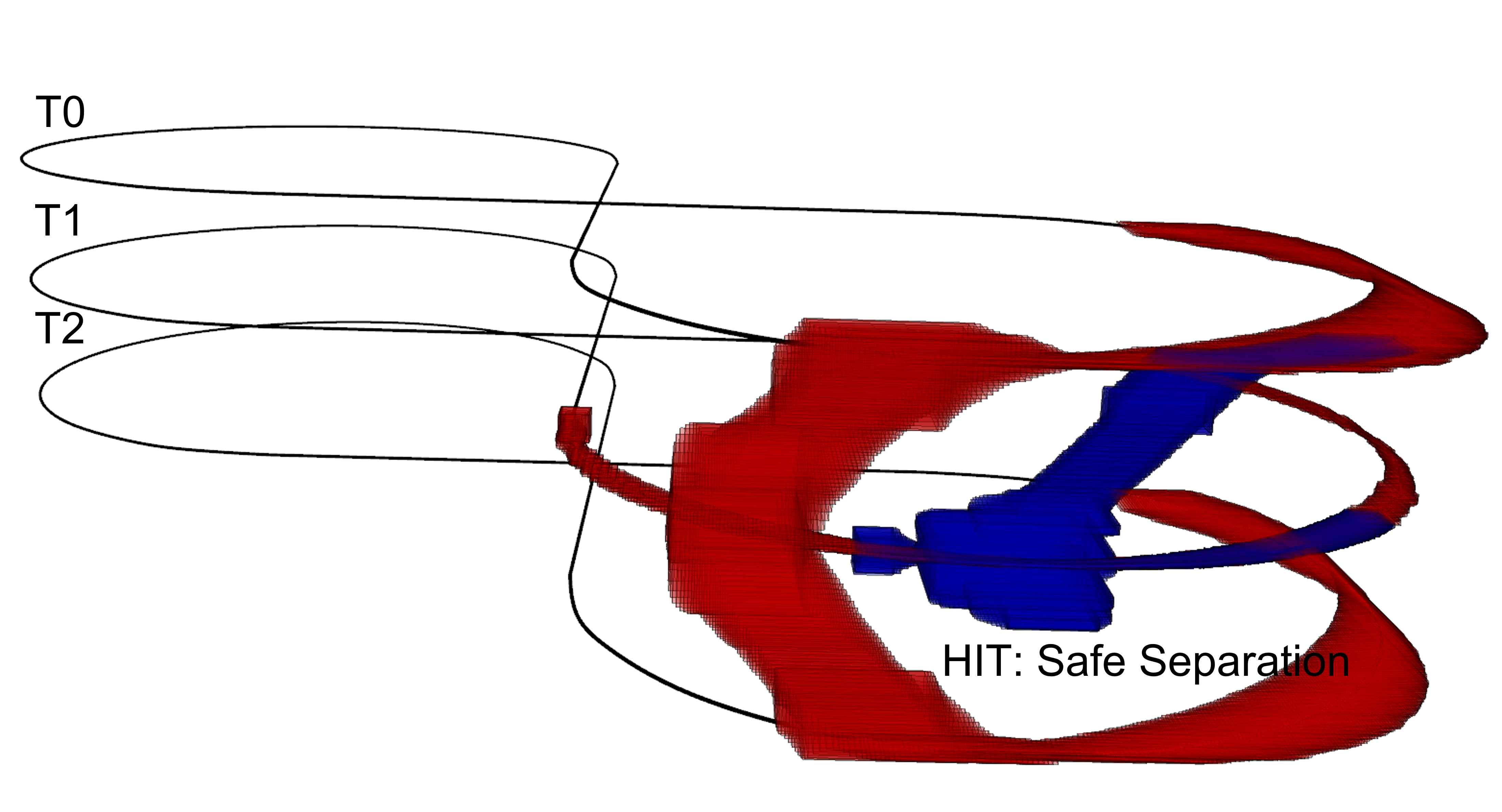}
    \end{minipage}\hfill
    \caption{\small {\em Left:} A 3-d figure-8 map with example track mode labels. {\em Center:} Simulation of a red drone nearing  the blue drone on \texttt{T1} and nondeterministically moving to \texttt{T0} or \texttt{T2}. Both branches are computed by \ourtool's \texttt{simulate} function. \textit{Right:} Computed reachable sets of the two drones cover more possibilities: both drones are allowed to switch tracks when they get close. All four  branches are explored by \ourtool and one is found to violate safety.
    }
    \label{fig:fig8map}
    \label{fig:demo_graph}
    \label{fig:ndsim}
\end{figure}

\paragraph{Creating scenarios.} Agents like the drones in this example are described by a {\em  simulator} and a {\em decision logic}. The  decision logic has to  be written in an expressive subset of Python (see code in Fig.~\ref{fig:controller} and  Appendix \ref{sec:parser_grammar} for more details). 
The decision logic for an ego agent takes as input its current state  and the (observable) states of the other agents, and updates the discrete state or the {\em mode}  of the ego agent. It may also update the continuous state of the ego agent.  
The {\em mode} of an agent, as we shall see later in Section~\ref{sec:agents}, has two parts---a {\em tactical mode\/} corresponding to agent's decision or discrete state, and a {\em track mode\/}  that is determined by the map. 
Using the \texttt{any} and \texttt{all} functions, the agent's decision logic can quantify over  other agents in the scene. For example, in lines~41-43 of Fig.~\ref{fig:controller} an agent updates its tactical mode to begin a track change if there is any agent near it.

\ourtool will parse this decision logic to internally construct the transition graph of the hybrid model with  guards and resets. 
The simulator can be written \yangge{in} any language and is treated as a black-box\footnote{\label{footnote:blackbox}This design decision for \ourtool is relatively independent. For reachability analysis, \ourtool currently uses black-box statistical approaches implemented in DryVR~\cite{dryvr} and NeuReach~\cite{neuReach-SunM22}. If the simulator is available as a white-box model, such as differential equations, then \ourtool could use model-based reachability analysis.}. For the examples discussed in this paper, the simulators are also written in Python.
Safety requirements can be specified using \assert statements
(see Fig. \ref{code:unsafe}). 
\begin{figure}
    \centering
    \scriptsize
\inputminted
[bgcolor=bg,
linenos,
breaklines,
xleftmargin=20pt,
firstline=38, lastline=52]{python}{code/drone_snips.txt}
\caption{\small Decision Logic Code Snippet from drone$\_$controller.py.}
\label{fig:controller}\label{code:guard_reset}
\end{figure}
\paragraph{Maps and sensors.} 
The map of a scenario specifies the tracks that the agents can follow. Besides creating from scratch, \ourtool provides functions that automatically generates  map objects from OpenDRIVE~\cite{noauthor_asam_nodate} files. The {\em sensor\/} function defines which variables from an agent are visible to other agents.
The default sensor function allows all agents to see all variables; we discuss how the sensor function can be modified to include bounded noise in Section~\ref{sec:exp:examples}. 
A map, a sensor and a collection of (compatible) agents together define a scenario object (Fig.~\ref{fig:scenario-code}). In the first few lines the drone agents are created, initialized, and added to the scenario object. 
A scenario can have heterogeneous agents with different decision logics. 
%
\begin{figure}
    \centering
    \scriptsize
\inputminted[bgcolor=bg,linenos, breaklines,
xleftmargin=20pt,firstline=32, lastline=43]{python}{code/drone_scene_snips.txt}    \caption{Scenario Specification File Snippet}
    \label{fig:scenario-code}
\end{figure}
\paragraph{Simulation and reachability.} 
Once a scenario is defined, \ourtool's \simulate function can generate simulation(s) of the system, which can be stored and plotted. As shown earlier in Fig.~\ref{fig:ndsim}, a simulation from a single initial state explores all possible branches that can be generated by the decision logics of the  interacting agents, upto a specified time horizon. %
%
\ourtool can  verify the safety assertions of a scenario with the \verify function.  It computes the  over-approximations of the {\em reachable sets\/} for  each agent, and \yangge{checks} these against the predicates defined by the assertions. 
%
Fig.~\ref{fig:demo_graph} visualizes the result of such a computation performed using the \texttt{verify} function. In this example, the safety condition is violated when the blue drone moves downward to avoid the red drone. The other branches of the scenario are proved to be safe. 
%
%
The \simulate and \verify functions save a copy of the resulting execution tree, which can be loaded and traversed to analyze the sequences modes and states that leads to safety violations. 
\paragraph{Building advanced functions.} 
 \ourtool library provides  powerful functions to implement  advanced verification algorithms. Section \ref{sec:incremental-verification} explains how users can modify the basic reachability algorithm to save computing time in repeated verification of  similar scenarios using caching and incremental verification. \ourtool also makes it convenient to \yangge{plug in} different reachability subroutines. 
The experiments in Section~\ref{sec:exp:examples} \yangge{show} how the sensor can be useful when modeling realistic inputs to the controllers and Section~\ref{sec:uncertainty} describes verification when dynamics have uncertainty. 
\section{Scenarios in Verse}
\label{sec:scenarios}


A {\em scenario\/} in \ourtool is specified by a  map, a collection of agents in that map, and a sensor function that defines the part of each agent that is visible to other agents. We describe  these components below  and in  Section~\ref{sec:scenariotoHA} we will discuss how they formally define a hybrid system. 
\subsection{Tracks, modes, and maps} 
A {\em workspace\/} $\workspace$ is \yangge{an} Euclidean space in which the agents reside (For example, a compact subset of $\reals^2$ or $\reals^3$). 
%
An agent's continuous dynamics makes it roughly follow  certain continuous curves in $\workspace$,  called {\em tracks}, and occasionally the agent's decision logic changes the track. 
Formally, a {\em track\/} is simply a continuous function $\mapcurve: [0,1]\rightarrow \workspace$, but not all such functions are valid tracks. A map $\map$  defines the  set of  tracks $\mapsetcurve_\map$ it permits. 
In a highway map, some   tracks will be aligned along the lanes while others will correspond to merges and exits. 
%

We assume that an agent's decision logic does not depend on exactly which of the infinitely many tracks it is following, but instead, it depends only on  which type of track it is following or the {\em track mode}. 
In the example in Section~\ref{sec:example}, the track modes  are \texttt{T0}, \texttt{T1}, \texttt{M01}, etc. Every (blue) track for transitioning from point on T0 to the corresponding point on T1 has track mode \texttt{M01}.
A map has  a finite set of track modes $\maplabel_\map$,  a labeling function 
$\mapvert_\map: \mapsetcurve_\map \rightarrow \maplabel_\map$ that gives the track mode for a track. 
It also has a  mapping 
 $\maplabcurve_\map: \workspace \times \maplabel_\map \rightarrow \mapsetcurve_\map$ that gives a specific track from a track mode and a specific position in the workspace.  

Finally, a Verse agent's decision logic can change its internal mode or {\em tactical mode\/} (E.g., \texttt{Normal} to \texttt{MoveUp}). 
When an agent changes its tactical mode, it may also update the track \yangge{it is} following and this is encoded in another  function:  
$\maplablab_\map: \maplabel_\map \times \strategy \times \strategy \rightarrow \maplabel_\map$ which takes the  current track mode, the current and the next tactical mode, and generates the new track mode the agent should follow. For example, when the tactical mode of a drone changes from \texttt{Normal} to \texttt{MoveUP} while it  is on \texttt{T1}, \yangge{this map function $\maplablab_\map(\texttt{T1}, \texttt{Normal}, \texttt{MoveUp}) = \texttt{M10}$ informs that the agent should follow a track with mode $\texttt{M10}$}. 
These sets and functions together define a \ourtool map object $\map = \langle \maplabel_\map, \mapvert_\map,\maplabcurve_\map, \maplablab_\map \rangle$. We will drop the subscript $\map$, when the map being used is clear from context. 
%

\subsection{Agents}\label{sec:agents}

A Verse {\em agent\/} is defined by modes and continuous state variables, a decision logic that defines (possibly nondeterministic)  discrete transitions, and a flow function that defines continuous evolution. 
An agent $\agent$  is {\em compatible\/} with a  map $\map$ if the agent's tactical modes $\strategy$ are a subset of the allowed input tactical modes for $\maplablab$.
This makes it  possible to instantiate the same agent on different compatible maps. 
 The {\em mode space\/} for an agent instantiated on map $\map$ is the set $\Dagent = \maplabel \times \strategy$, where  $\maplabel$ is the set of  track modes in $\map$  and  $\strategy$ is the set of tactical modes of the agent. 
The {\em continuous state space\/} is $\Xagent = \workspace \times \Zagent$, where $\workspace$ is the workspace (of $\map$) and $\Zagent$ is the space of other continuous state variables.
The (full) {\em state space\/}  is the Cartesian product  $\agentfullstate = \Xagent \times \Dagent$. 
%
In the two-drone example in Section \ref{sec:example}, the continuous states variables  $px,py,pz,vx, vy, vz$ are the positions and velocities along the three axes of the workspace.
The modes are $\langle\texttt{Normal},\texttt{T1} \rangle$, $\langle\texttt{MoveUp},\texttt{M10}\rangle$, etc.
%

An {\em agent\/} $\agent$ in map $\map$ with $k-1$ other agents is defined by a tuple 
    $\agent = \langle \agentfullstate, \agentfullstate^0, \agentguard, \agentreset, \agentflow \rangle$, where
$\agentfullstate$ is the  state space, 
$\agentfullstate^0 \subseteq \agentfullstate$ is the set of initial states. 
%
%
The guard $\agentguard$ and reset $\agentreset$ functions jointly define the discrete transitions. For a pair of modes 
$\dagent, \dagent' \in \Dagent,$
$\agentguard(\dagent,\dagent') \subseteq \Xagent^k$ defines the condition under which a transition from 
$\dagent$ to $\dagent'$ is enabled. 
%
%
 The $\agentreset(\dagent, \dagent'):\Xagent^k \rightarrow \Xagent$ function specifies how the continuous states of the agent are updated when the mode switch happens. 
Both of these functions take as input the sensed continuous states of all the other $k-1$ agents in the scenario. Details about the sensor which transmits state information across agents is  discussed in Section~\ref{sec:sensor}.
The $\agentguard$  and the $\agentreset$ functions are actually not defined separately, but are extracted by the Verse parser from a block of structured Python code as shown in Fig.~\ref{code:guard_reset}. 
The discrete states in the \texttt{if} condition and the assignments define the source and destination of discrete transition. The \texttt{if} conditions involving continuous states define the guard for the transitions and the assignments of continuous states define the reset. 
Expressions with $\texttt{any}$ and $\texttt{all}$ functions are  unrolled to disjunctions and conjunctions according to the number of agents $k$.

For example, Lines 47-50 define transitions $\langle\texttt{MoveUp},\texttt{M10}\rangle$ to $\langle\texttt{Normal},\texttt{T0}\rangle$ and $\langle\texttt{MoveUp},\texttt{M21}\rangle$ to $\langle\texttt{Normal},\texttt{T1}\rangle$.
The change of track mode is given by the $\maplablab$ function. 
The guard for this transition comes from the \texttt{if} condition at Line 48. For example, $\agentguard(\langle\texttt{MoveUp},\texttt{M10}\rangle, \langle\texttt{Normal},\texttt{T0}\rangle) = \{\xagent\mid-1<\texttt{T0}.pz-\xagent.pz<1\}$ for $\xagent \in \Xagent$. Here continuous states remain unchanged after transition. 
The final component of the agent is the {\em flow\/}  function $\agentflow: \Xagent\times\Dagent\times\reals^{\geq 0} \rightarrow \Xagent$ which defines the continuous time evolution of the continuous state. For any initial condition  $\langle\xagent^0,\dagent^0\rangle \in Y$, $F(\xagent^0,\dagent^0)(\cdot)$ gives the continuous state of the agent as a function of time. In this paper, we use $F$ as a black-box function (see footnote~\ref{footnote:blackbox}). 


\subsection{Sensors and scenarios}
\label{sec:sensor}
For a scenario with $k$ agents, a {\em sensor\/}  function $\sensor:\agentfullstate^k \rightarrow \agentfullstate^{k}$ defines  the  continuous observables as a function of the continuous state. %
For simplifying exposition, in this paper we assume that observables have the  same type as the continuous state $Y$,  and that each agent $i$ is observed by all other agents identically. 
This simple, overtly transparent sensor model,  still allows us to write realistic agents that  only use information about  nearby agents. 
In a highway scenario, the observable part of agent $j$ to another agent $i$ may be the relative distance $y_{j} = x_j - x_i$, and vice versa, which can be computed as a function of the continuous state variables $x_j$ and $x_i$.
%
A different sensor function which gives  nondeterministic noisy observations, appears in Section~\ref{sec:exp:examples}.

A Verse {\em scenario} $\scenario$ is defined by (a) a map $\map$,
(b) a collection of $k$ agent instances $\{\agent_1 ... \agent_k\}$ that are compatible with $\map$, and  (c) a sensor $\sensor$ for the $k$ agents. 
%
Since all the agents are instantiated on the same compatible map $\map$, they share the same workspace. Currently, we require agents to have  identical state spaces, i.e., $\agentfullstate_i = \agentfullstate_j$, but they can have different decision logics  and different continuous dynamics.
\section{\ourtool scenario to hybrid verification}
\label{sec:scenariotoHA}

In this section, we define the underlying hybrid system   $\ha(\scenario)$, that a \ourtool scenario $\scenario$ specifies. The verification questions that \ourtool is equipped to answer are stated in terms of the behaviors or {\em executions\/} of $\ha(\scenario)$. \ourtool's notion of a hybrid automaton is close to that in Definition~5 of~\cite{dryvr}. As usual, the automaton has discrete and continuous states and discrete transitions defined by guards and resets. The only uncommon aspect in~\cite{dryvr}   is that the continuous flows may be defined by a black-box simulator functions, instead of white-box analytical models (see footnote~\ref{footnote:blackbox}). 

Given a scenario with $k$ agents  $\scenario = \langle \map, \{\agent_1,...\agent_k\}, \sensor, \strategy \rangle$, the corresponding hybrid automaton $\ha(\scenario) = \langle \Xha, \Xha^0, \Dha,  \Dha^0, \haguard, \hareset, \tl \rangle$, where
\begin{enumerate}
    \item $\Xha := \prod_i \Xagent_i$ is the  {\em continuous state space\/}. An element $\xha \in \Xha$ is called a {\em state}.
     $\Xha^0 := \prod_i \Xagent_i^0 \subseteq \Xha$ is the set of {\em initial continuous states.\/}
    \item $\Dha := \prod_i \Dagent_i$ is the {\em mode space\/}. An element $\dha \in \Dha$ is called a {\em mode}. 
     $\Dha^{0} := \prod_i \Dagent_i^0 \subseteq \Dha$ is the finite set of {\em initial modes}.
    \item For a mode pair  $\dha, \dha' \in \Dha$, $\haguard(\dha, \dha') \subseteq \Xha$ defines the continuous states from which a transition from $\dha$ to $ \dha'$ is enabled. A state
     $\xha\in \haguard(\dha, \dha')$ iff there exists an agent $i \in \{1,...,k\}$, such that $\xha_i \in\agentguard_i(\dha_i, \dha_i')$ and $\dha_j=\dha_j'$ for $j\neq i$.
%
    \item For a mode pair $\dha, \dha' \in \Dha$, $\hareset(\dha, \dha'):\Xha \rightarrow \Xha$ defines the change of continuous states after a transition from $\dha$ to $\dha'$. For a continuous state $\xha \in \Xha$, $\hareset(\dha, \dha')(\xha) = \agentreset_i(\dha_i,  \dha_i')(\xha) \text{ if } \xha \in \agentguard_i(\dha_i, \dha_i')$, otherwise $= \xha_i$.
    \item $\tl$ is a set of pairs $\langle \Tha, \dha\rangle$, where the {\em trajectory\/} $\Tha:[0,T] \rightarrow \Xha$ describes the evolution of continuous states in mode $\dha \in \Dha$. Given $\dha\in \Dha, \xha^0\in\Xha$, $\Tha$ should satisfy $\forall t\in\reals^{\geq 0}, \Tha_i(t) = F_i(\xha^0_i, \dha_i)(t)$.

\end{enumerate}

\begin{proposition}
If  a scenario with $k$ agents $\scenario = \langle \map, \{\agent_1,...,\agent_k\}, \sensor, \strategy \rangle$, satisfies the following: 
(1)  map and the agents are compatible,
(2) all agents have identical sets of states and modes, $\agentfullstate_i = \agentfullstate_j$, and (3) agent states match the input type of $\sensor$, then  $\ha(\scenario)$ is a valid hybrid system. 
\end{proposition}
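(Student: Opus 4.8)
The plan is to treat this as a well-formedness check: a \ourtool hybrid automaton is valid exactly when each of the seven components of $\ha(\scenario) = \langle \Xha, \Xha^{0}, \Dha, \Dha^{0}, \haguard, \hareset, \tl \rangle$ is a well-defined set or function of the prescribed type, since the flows are black boxes and no further analytic structure is demanded. First I would dispatch the state and mode spaces. Condition~(2) gives $\agentfullstate_i = \agentfullstate_j$, hence the continuous state spaces $\Xagent_i$ coincide and the Cartesian products $\Xha = \prod_i \Xagent_i$ and $\Xha^0 = \prod_i \Xagent_i^0$ are well-defined. Condition~(1) guarantees that each agent's tactical modes lie in the input domain of $\maplablab$, so $\Dagent_i = \maplabel \times \strategy_i$ is well-defined over the common track-mode set $\maplabel$ of $\map$; the products $\Dha = \prod_i \Dagent_i$ and $\Dha^{0} = \prod_i \Dagent_i^0$ then follow, with $\Dha^{0}$ finite because each factor is a finite product of finite sets.

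The crux is the guard and reset, the only components that couple the agents, and both couple them through the sensor. The definition requires $\haguard(\dha, \dha') \subseteq \Xha$, so I must show its membership test can be evaluated at every $\xha \in \Xha$. Here condition~(3) enters: each $\agentguard_i$ depends on the sensed states of all agents, i.e. on the output of $\sensor$, whose domain is $\agentfullstate^k$; conditions~(2) and~(3) together certify that a composite state $\xha \in \Xha$ has exactly this type, so $\sensor(\xha)$ is defined and each per-agent guard $\agentguard_i(\dha_i, \dha_i')$ can be tested on it. The characterization---there is an $i$ at whose guard the sensed state lies while $\dha_j = \dha_j'$ for $j \neq i$---then carves out a genuine subset of $\Xha$.

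For the reset I would argue separately that the stated case split yields a bona fide function $\hareset(\dha,\dha'): \Xha \to \Xha$ rather than a relation. The key observation is that the membership condition for $\haguard$ forces $\dha_j = \dha_j'$ at all but one index, so any mode pair $(\dha, \dha')$ that admits a transition differs in a single coordinate $i$; consequently the reset $\agentreset_i$ is applied to that unique agent while every remaining coordinate is left equal to $\xha_j$, and no ambiguity arises in the case analysis. This ``exactly one agent changes mode'' structure, inherited from the guard, is what makes $\hareset$ well-defined.

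Finally, $\tl$ is well-defined because its trajectories are assembled coordinatewise from the per-agent flows: for $\dha \in \Dha$ and $\xha^0 \in \Xha$ the curve $t \mapsto (F_1(\xha^0_1,\dha_1)(t), \dots, F_k(\xha^0_k,\dha_k)(t))$ is fixed by the black-box flows $F_i$, each defined at the valid mode $\dha_i$ by compatibility. The step I expect to be the main obstacle is not any single computation but the bookkeeping around the sensor in the guard and reset: one must check carefully that conditions~(2) and~(3) line the types up so that $\sensor$ applies to the composite state, and that the single-coordinate structure of admissible transitions is precisely what turns the reset's case split into a function. Once these two points are isolated, the remaining obligations are routine type checks.
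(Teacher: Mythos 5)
The paper offers no proof of this proposition at all --- it is stated bare, as an immediate consequence of the definitions in Section~\ref{sec:scenariotoHA} --- so there is nothing to compare your argument against step by step; the only question is whether your type-checking reading of ``valid hybrid system'' is the intended one, and it is (the paper never formalizes validity beyond pointing to the automaton definition of~\cite{dryvr}, so well-formedness of the seven components is exactly what needs checking). Your decomposition matches how the hypotheses are actually used in the paper's construction of $\ha(\scenario)$: condition (2) is what makes a common state type --- and hence the sensor signature $\agentfullstate^k \rightarrow \agentfullstate^k$ --- meaningful; condition (3) is what lets the per-agent guards $\agentguard_i$ and resets $\agentreset_i$, which consume sensed states of the other agents, be evaluated on a composite state $\xha \in \Xha$; and condition (1) is what makes $\Dagent_i = \maplabel \times \strategy$ and the track-mode updates via $\maplablab$ well-defined. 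The one substantive point you isolate --- that the guard's requirement $\dha_j = \dha_j'$ for $j \neq i$ forces admissible mode pairs to differ in exactly one coordinate, which is what turns the reset's case split into a genuine function rather than a relation --- is correct and worth recording, since the paper's clause ``otherwise $=\xha_i$'' is only unambiguous under that reading (with the degenerate pair $\dha = \dha'$ implicitly excluded). In short: your proof is correct, and it supplies detail the paper simply omits.
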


We denote by $\Tha.fstate$, $\Tha.lstate$, and $\Tha.ltime$ the initial state $\Tha(0)$, the last state $\Tha(T)$, and $\Tha.ltime = T$. 
%
For a sampling parameter $\delta >0$ and a length $m$, a $\delta$-{\em execution\/} of a hybrid automaton $\ha = \ha(\scenario)$ is a sequence of $m$ labeled trajectories $\exec := \langle \Tha^0, \dha^0\rangle,...,\langle\Tha^{m-1}, \dha^{m-1}\rangle$, such that 
(1) $\Tha^0.fstate \in \Xha^0, \dha^0 \in \Dha^0$,
(2) For each $i \in \{1,...,m-1\}$, $\Tha_i.lstate\in\haguard(\dha^i, \dha^{i+1})$ and $\Tha^{i+1}.fstate = \hareset(\dha^i, \dha^{i+1})(\Tha^i.lstate)$, and 
(3) For each $i \in \{1,...,m-1\}$, $\Tha^i.ltime = \delta$ for $i\neq m-1$ and $\Tha^i.ltime \leq \delta$ for $i=m-1$. 

We define the first and last state of an execution $\exec = \langle \Tha^0, \dha^0\rangle,...,\langle\Tha^{m-1}, \dha^{m-1}\rangle$ as $\exec.fstate = \Tha^0.fstate$, $\exec.lstate=\Tha^{m-1}.lstate$ and the first and last mode as $\exec.fmode = \dha^0$ and $\exec.lmode = \dha^{m-1}$. The set of reachable states is defined by $\reach{} := \{\exec.lstate \mid \exec\text{ is an execution of }\ha\}$. In addition, we denote the reachable states in a specific mode $\dha \in \transvert$ as $\reach{}(\dha)$ and $\reach{}(T)$ to be the set of reachable states  at time $T$.
Similarly,  denoting the unsafe states for mode $\dha$ as $\unsafe(\dha)$, the safety verification problem for $\ha$ can be solved by checking whether  $\forall \dha \in \Dha$, $\reach{}(\dha) \cap \unsafe(\dha) = \emptyset$.
Next, we discuss \ourtool functions for verification via reachability. 

%

\section{Building verification algorithms in \ourtool}
\label{sec:verification_algorithm}

The Verse library comes \yangge{with} several built-in verification algorithms, and it  provides functions that users can be use to implement powerful new algorithms. We first describe the basic building blocks and in Sections~\ref{sec:incremental-verification} and~\ref{sec:exp:inc} we discuss more advanced use cases and algorithms.

\subsection{Reachability analysis}
\label{sec:bfs}
Consider  a scenario $\scenario$ with $k$ agents and the corresponding hybrid automaton $\ha(\scenario)$. For a pair of modes, $\dha,\dha'$ the standard discrete $\post_{\dha,\dha'}:\Xha \rightarrow \Xha$ and continuous $\post_{\dha,\delta}:\Xha \rightarrow \Xha$ operators are defined as follows:
For any state $\xha, \xha' \in \Xha$,
$\post_{\dha,\dha'}(\xha) = \xha'$
iff $\xha \in \haguard(\dha,\dha')$ and $\xha'=\hareset(\dha,\dha')(\xha)$;
and, $\post_{\dha,\delta}(\xha) = \xha'$ iff $\forall i\in {1,...,k}$, $\xha_i'=\agentflow_i(\xha_i, \dha_i, \delta)$. 
These operators are also lifted to sets of states in the usual way. \ourtool provides \postcont to compute $\post_{\dha,\delta}$ and \postdisc to compute $\post_{\dha,\dha'}$. Instead of computing the exact post, \postcont and \postdisc compute over-approximations using improved implementations of the algorithms in~\cite{dryvr}.
Verse's \verify function implements a  reachability analysis algorithm  using these post operators (Algorithm~\ref{alg:veri}). This algorithm constructs an execution tree $\exectree = \langle\treevert,\treeedge \rangle$ up to depth $m$ in breadth first order. Each vertex $\langle \Xhaset, \dha \rangle \in \treevert$ is a pair of \yangge{a set of states and a mode}. The root is $\langle \Xha^0, \dha^0\rangle$. There is an edge from 
 $\langle \Xhaset, \dha \rangle$ to $\langle \Xhaset', \dha'\rangle$, iff \yangge{$\Xhaset' = \post_{\dha',\delta}(\post_{\dha,\dha'}(\Xhaset))$}. 
%



\begin{algorithm}
\caption{ }
\label{alg:veri}
\begin{algorithmic}[1]
    \Function{\algveri}{$\langle\Xha^0,\dha^0\rangle, \ha,\treedepth, \delta$}
    \State $root \leftarrow \langle\Xha^0,\dha^0\rangle$ 
    \State $depth \leftarrow 0$
    \State $queue \leftarrow[\langle\langle\Xha^0, \dha^0 \rangle,depth\rangle]$ \label{alg:veri:root}
    \While{$queue \neq \emptyset$ and $depth\leq \treedepth$} 
        \State $\langle \sd{},depth \rangle\leftarrow queue.\texttt{dequeue}()$ \label{alg:veri:order}
        \For{$\dha'$, s.t. $\haguard(\dha,\dha')$}
            \If{$\Xhaset \cap \haguard(\dha,\dha')\neq \emptyset$} \label{alg:veri:guard_check}
                \State $\sdp{}\leftarrow\langle \postcont(\postdisc(\Xhaset,\dha,\dha'),\dha', \delta), \dha'\rangle$; 
                \State $\sd{}.\texttt{addChild}(\langle\Xhaset',\dha'\rangle)$ \label{alg:veri:postdd}
                \State $queue.\texttt{add}(\langle \sdp{}, depth+1 \rangle)$ \label{alg:veri:next}
            \EndIf
        \EndFor 
    \EndWhile
    \State \Return $root$
    \EndFunction
\end{algorithmic}
\end{algorithm}




\begin{proposition}
\label{prop:veri}
For hybrid automaton $\ha = \ha(\scenario)$, let $\exectree=\langle \treevert, \treeedge \rangle$ be the execution tree with depth $T$ constructed by {\bf \algveri}, then
for each level $t\in \{0,...,T\}$, 
$\reach{}(\delta t) = V(t)$,
where $\treevert(t)$ is the union of the states in the vertices at depth $t$.
\end{proposition}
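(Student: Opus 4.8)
The plan is to prove the level-wise identity $\reach{}(\delta t)=\treevert(t)$ by induction on the depth $t$, treating $\postdisc$ and $\postcont$ as the exact operators $\post_{\dha,\dha'}$ and $\post_{\dha,\delta}$ used in the edge definition of $\exectree$ (if instead they return over-approximations, the same argument yields only the soundness direction $\reach{}(\delta t)\subseteq\treevert(t)$). The backbone is a correspondence between root-to-vertex paths of $\exectree$ of length $t$ and $\delta$-executions of $\ha$ whose last state occurs at time $\delta t$: a path $\langle\Xha^0,\dha^0\rangle\to\cdots\to\langle\Xhaset,\dha\rangle$ through a mode sequence $\dha^0,\dots,\dha^{t}$ should carry exactly the set of states $\exec.lstate$ as $\exec$ ranges over executions with that mode sequence, so that unioning over all depth-$t$ vertices gives $\reach{}(\delta t)$ on one side and $\treevert(t)$ on the other.

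First I would settle the base case $t=0$: the root is $\langle\Xha^0,\dha^0\rangle$ (line~\ref{alg:veri:root}), so $\treevert(0)=\Xha^0$, and by condition~(1) of the $\delta$-execution definition the states reachable at time $0$ are exactly $\Xha^0$. For the inductive step I would show both inclusions at level $t+1$ from the hypothesis $\reach{}(\delta t)=\treevert(t)$. Because $\post_{\dha,\dha'}$ and $\post_{\dha,\delta}$ are defined pointwise and then lifted to sets, they distribute over unions, so $\treevert(t+1)=\bigcup_{\mathrm{edges}}\post_{\dha',\delta}(\post_{\dha,\dha'}(\Xhaset))$ over the children generated at line~\ref{alg:veri:postdd}. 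For $(\subseteq)$, a state reachable at $\delta(t+1)$ is the last state of an execution whose prefix reaches, at time $\delta t$, a state in $\reach{}(\delta t)=\treevert(t)$, hence in some vertex $\langle\Xhaset,\dha\rangle$; the final trajectory then contributes one discrete transition $\dha\to\dha'$ (governed by $\haguard$ and $\hareset$, condition~(2)) followed by a flow of duration $\delta$ (condition~(3)), which is precisely one application of $\post_{\dha',\delta}\circ\post_{\dha,\dha'}$, i.e.\ one tree edge. For $(\supseteq)$, every depth-$(t+1)$ vertex has a parent at depth $t$ whose set lies in $\reach{}(\delta t)$ by hypothesis, and the edge relation lets me splice one more guarded transition plus flow onto a witnessing execution, exhibiting each such state as reachable at $\delta(t+1)$.

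The step I expect to be the main obstacle is making this per-edge correspondence \emph{exact} rather than merely sound, which forces careful bookkeeping of the continuous/discrete ordering and of ``staying'' in a mode. A tree edge applies the discrete $\post_{\dha,\dha'}$ before the continuous $\post_{\dha',\delta}$, whereas a $\delta$-execution flows first and transitions afterwards, so the mode indexing along a path must be aligned with the transition instants so that depth $t$ lands exactly at time $\delta t$. Equally delicate is that an execution may continue in the same mode with no genuine switch, while the tree produces a successor only through some target $\dha'$ passing the test at line~\ref{alg:veri:guard_check}; to recover the ``keep flowing'' behaviour exactly (neither dropping a reachable state nor inventing a spurious one) I would rely on the decision logic always admitting an enabled self-transition $\dha\to\dha$ with identity reset, and I would verify that the guard-nonemptiness pruning at line~\ref{alg:veri:guard_check} removes only branches contributing no reachable states. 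Establishing that this convention and pruning make the union over children coincide with $\reach{}(\delta(t+1))$, with no over- or under-counting, is the crux; the component-wise (existential-over-agents) structure of $\haguard$ and the piecewise definition of $\hareset$ then enter only through routine unfolding of the definitions.
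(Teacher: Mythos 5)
First, a point of comparison: the paper never actually proves Proposition~\ref{prop:veri}. The only proof (sketch) given is for Proposition~\ref{prop:inc}; for this statement the authors merely remark afterwards that it holds when the post computations are exact and degrades to an over-approximation otherwise. So there is no paper proof for your argument to diverge from, and your skeleton --- induction on depth, a correspondence between root-to-vertex paths and $\delta$-executions, distribution of exact posts over unions --- is surely what the authors intended. Your base case, the union-distribution step, and the observation that the pruning at line~\ref{alg:veri:guard_check} is harmless (an empty intersection with $\haguard(\dha,\dha')$ means the discrete post contributes nothing) are all sound.

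The gap is that what you correctly flag as ``the crux'' is not deferrable bookkeeping: under the paper's stated definitions it is a genuine obstruction, and your proposal leaves exactly that part unproved. Already at level $t=1$ the two sets differ in both directions. A tree edge produces states $\post_{\dha',\delta}(\hareset(\dha^0,\dha')(\xha))$ with $\xha\in\Xha^0\cap\haguard(\dha^0,\dha')$, i.e.\ the reset is applied at time $0$ and the flow happens in the new mode $\dha'$. A $\delta$-execution, on the natural reading of conditions (2)--(3) (whose index ranges are themselves off by one), must flow in $\dha^0$ for the full first interval and can only take the transition at time $\delta$; hence $\reach{}(\delta)$ contains the states $\hareset(\dha^0,\dha')(\post_{\dha^0,\delta}(\xha))$ (length-$2$ executions whose last trajectory has duration $0$) together with the pure-flow states $\post_{\dha^0,\delta}(\Xha^0)$. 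Since reset and flow do not commute, and since the pure-flow states appear in $\treevert(1)$ only if an identity self-transition $\dha^0\to\dha^0$ is always enabled --- an assumption you invoke but the paper nowhere grants --- neither inclusion holds in general. No alignment of indices repairs this: a complete argument must either (i) restate the $\delta$-execution definition so that the guarded transition fires at the \emph{start} of each $\delta$-interval, matching the $\postdisc$-before-$\postcont$ order of Algorithm~\ref{alg:veri}, or (ii) weaken the claim to the soundness inclusion $\reach{}(\delta t)\subseteq\treevert(t)$, which is the form the paper itself falls back on and the form asserted in Proposition~\ref{prop:inc}. As written, your plan names the obstruction, introduces unstated hypotheses to handle half of it, and resolves neither half.
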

Proposition \ref{prop:veri} holds when the post computations are exact. However, typically, we rely on techniques that compute the over-approximations of the actual post, in which, $\treevert(t)$ over-approximates $\reach{}(\delta t)$. 
Currently, \ourtool  implements only  bounded time reachability, however, basic unbounded time analysis with fixed-point checks could be added following~\cite{dryvr,sibai2021scenechecker}.

\subsection{Incremental Verification}
\label{sec:incremental-verification}

During the design-analysis process, users  perform many simulation and verification runs on slightly tweaked scenarios. Can we do better than starting each verification run from scratch? In \ourtool, we have implemented an incremental analysis algorithm that improves the performance of \texttt{simulate} and \verify by reusing data from previous verification runs. This algorithm also illustrates how \ourtool can be used to implement different algorithms. 

Consider two hybrid automata $\ha_i = \ha(\scenario_i),$ $i \in \{1,2\}$ that only differ in the discrete transitions. That is,
(1) $\Xha_2 = \Xha_1$, 
(2) $\Dha_2 = \Dha_1$,
and 
(3) $\tl_2 = \tl_1$, while the initial conditions, the guards, and the resets are slightly  different\footnote{Note that in this section subscripts index  different hybrid automata, instead of agents within the same  automaton (as we did in Sections~\ref{sec:scenarios} and \ref{sec:scenariotoHA}).}.  
$\scenario_1$ and $\scenario_2$ have the same sensors, maps, and agent flow functions. Let $\exectree_1=\langle\treevert_1,\treeedge_1\rangle$ and $\exectree_2=\langle\treevert_2,\treeedge_2\rangle$ be the execution trees for  $\ha_1$ and $\ha_2$. 
Our idea of incremental verification is to reuse some of the computations in constructing the tree for $\ha_1$ in computing the same for $\ha_2$. 

Recall that in \algveri, expanding each  vertex $\langle\Xhaset_1,\dha_1\rangle$ of $\exectree_1$
with a  possible mode  involves a guard check, a computation of $\post_{\dha,\dha'}$, and $\post_{\dha,\delta}$. 
The \alginc algorithm avoids performing these computations while constructing $\exectree_2$ by reusing those computations from $\exectree_1$, if possible. To this end, \alginc is endowed with two caches: $\guardcache$
stores information about discrete transitions and $\flowcache$ stores information about continuous flows.
The key to $\guardcache$ is a vertex $\langle\Xhaset_2,\dha_2\rangle\in \treevert_2$, a guard  $\haguard_2(\dha_2,\dha_2')$ and a reset $\hareset_2(\dha_2,\dha_2')$ for a mode pair. 
%
%
The retrieved data from $\guardcache$ will be a pair $\langle s, v'\rangle$ where $s\in\{\sat,\unsat,\unknown\}$ is the guard checking result for $\Xhaset_2$ and $\haguard_2(\dha,\dha')$ and $v'=\langle\post_{\dha,\dha'}(\Xhaset_2),\dha'\rangle$ is the vertex after applying post. 
A cache hit can happen if there exists an entry in the cache with key match exactly with $\langle \langle\Xhaset_2,\dha_2 \rangle, \haguard_2(\dha_2,\dha_2'), \hareset_2(\dha_2,\dha_2')\rangle$.
%
%
If $\langle \sd{2}, \haguard_2(\dha_2,\dha_2'), \hareset_2(\dha_2,\dha_2')\rangle\in \guardcache$
and $s=\sat$, then we know from $\guardcache$ that $\Xhaset_2$ satisfies $\haguard_2(\dha_2,\dha_2')$ 
and the post of $\Xhaset$ can be retrieved from the cache $v'=\langle\post_{\dha,\dha'}(\Xhaset_2), \dha_2'\rangle$. If $\langle \sd{2}, \haguard_2(\dha_2,\dha_2'), \hareset_2(\dha_2,\dha_2')\rangle\in \guardcache$ but $s=\unsat$, then $\Xhaset_2$ does not satisfy the guard $\haguard_2(\dha_2,\dha_2')$ and $v'=\none$. If $\langle \sd{2}, \haguard_2(\dha_2,\dha_2'), \hareset_2(\dha_2,\dha_2')\rangle\notin \guardcache$, then $s=\unknown$, $v'=\none$ and the guard checking and post computation from $\Xhaset_2$ will need to happen afresh for $\ha_2$. 

The second cache constructed is $\flowcache$. The key to the cache will be a vertex $\langle\Xhaset,\dha \rangle$. A cache hit can happen if there exist an entry in the cache $\langle\Xhaset',\dha'\rangle$ such that $\Xhaset\subseteq \Xhaset'$ and $\dha=\dha'$ and the retrieved value from the cache will be $\Xhaset''=\post_{\dha,\delta}(\Xhaset')\supseteq \post_{\dha,\delta}(\Xhaset)$.



Similar to \algveri, for each vertex in the tree, \alginc expands all possible mode transitions (Line \ref{alg:inc:exp_trans}).The full algorithm is shown in Algorithm~\ref{alg:inc} for completeness and we skip the \yangge{detailed} description owing to space limitations. \alginc checks $\guardcache$ or $\flowcache$ before every $\post$ computation to retrieve and reuse computations when possible. The caches can save information from any number of previous executions, so \alginc can be even more efficient than \algveri when running many consecutive verification runs.

\begin{proposition}
\label{prop:inc}
Given the caches $\guardcache$ and $\flowcache$ constructed while constructing $\exectree_1$ for $\ha_1$,
the tree $\exectree_2$ constructed by \alginc is sound. That is
\[
\reach{2}(\delta t)\subseteq \treevert_{2}(t), \forall t\in \{0,...,T\}
\]
\end{proposition}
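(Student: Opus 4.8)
The plan is to prove the inclusion $\reach{2}(\delta t)\subseteq \treevert_2(t)$ by induction on the level $t$, reducing everything to one structural observation: \alginc is identical to \algveri run on $\ha_2$ except that certain guard checks and $\post$ computations are replaced by cache retrievals, and I would show that each such retrieval returns a sound over-approximation of the quantity it replaces. Granting this, every vertex of $\exectree_2$ contains at least the states that the baseline \algveri would place there, so the over-approximation version of Proposition~\ref{prop:veri} applied to $\ha_2$ yields the claim.

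For the induction, the base case is immediate: the root of $\exectree_2$ is $\langle\Xha_2^0,\dha_2^0\rangle$, which contains all of $\reach{2}(0)$. For the inductive step, assume $\reach{2}(\delta t)\subseteq \treevert_2(t)$ and take any $\xha\in\reach{2}(\delta(t+1))$. Tracing the last segment of the witnessing execution, $\xha$ arises from some $\xha^-\in\reach{2}(\delta t)$ by a discrete transition $\langle\dha_2,\dha_2'\rangle$ whose guard is satisfied, followed by a flow of duration $\delta$. By the hypothesis $\xha^-$ lies in the state set $\Xhaset_2$ of some depth-$t$ vertex $\sd{2}$, so $\Xhaset_2\cap\haguard_2(\dha_2,\dha_2')\neq\emptyset$ and \alginc expands this mode pair. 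It then remains to show that the child vertex \alginc produces contains $\xha$, i.e.\ that it over-approximates $\post_{\dha_2,\delta}(\post_{\dha_2,\dha_2'}(\Xhaset_2))$.

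This is where the two caches enter. A hit in $\guardcache$ requires an \emph{exact} match of the key $\langle\sd{2},\haguard_2(\dha_2,\dha_2'),\hareset_2(\dha_2,\dha_2')\rangle$; since $\ha_1$ and $\ha_2$ share the same state space, mode space, and flows, such a match means the stored guard and reset are literally those of $\ha_2$, so the cached status $\sat$ comes with the correct discrete post $\post_{\dha_2,\dha_2'}(\Xhaset_2)$, and a stored $\unsat$ certifies that this very $\Xhaset_2$ misses the identical guard (hence dropping the branch loses no successor); a miss ($\unknown$) forces a fresh computation. A hit in $\flowcache$ on key $\sd{2}$ uses a stored entry $\langle\Xhaset',\dha_2\rangle$ with $\Xhaset_2\subseteq\Xhaset'$ and returns $\post_{\dha_2,\delta}(\Xhaset')$; because the flows are identical across the two automata and the over-approximating operator \postcont is monotone, $\Xhaset_2\subseteq\Xhaset'$ gives $\post_{\dha_2,\delta}(\Xhaset_2)\subseteq\post_{\dha_2,\delta}(\Xhaset')$, so the retrieved set still contains the true continuous post. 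In every case the child set over-approximates $\post_{\dha_2,\delta}(\post_{\dha_2,\dha_2'}(\Xhaset_2))$ and therefore contains $\xha$, closing the induction.

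The main obstacle is justifying the two lookups as genuinely sound \emph{across the two automata}. The guard cache is delicate in the $\unsat$ case: I must argue that an exact key match makes a previously recorded ``guard not satisfied'' verdict still valid for $\ha_2$, so that reusing it cannot silently prune a reachable mode successor; this rests entirely on the key carrying the full guard and reset, not merely the mode pair. The flow cache is the only step that introduces fresh over-approximation, and its soundness hinges on \postcont being monotone under set inclusion together with the standing assumption $\tl_2=\tl_1$ that the two scenarios share flow functions. I would therefore make the monotonicity of \postcont an explicit hypothesis, since it is the single property of the black-box $\post$ implementation on which the correctness of caching ultimately depends.
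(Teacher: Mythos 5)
Your proposal is correct and takes essentially the same route as the paper's own (sketched) proof: both arguments rest on the observation that a guard-cache hit requires an exact key match---so the retrieved discrete post, and any recorded \unsat verdict, are exactly valid for $\ha_2$---while a flow-cache hit returns $\post_{\dha,\delta}$ of a superset and hence over-approximates by monotonicity, and in the absence of any hit \alginc behaves identically to \algveri. The explicit induction on tree depth, the handling of the \unsat case, and the statement of \postcont's monotonicity as a hypothesis are details the paper leaves implicit, but they fill in rather than change the paper's argument.
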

\begin{proof}
The proof relies on \alginc's design ensuring that the cache data is always an over approximation of the actual post computations. 
Consider $\guardcache$ and $\flowcache$ constructed while constructing $\exectree_1$ and the tree growth algorithm \alginc from a vertex $\langle \Xhaset_2, \dha_2\rangle \in \exectree_2$ for tree for $\ha_2$. When no cache hit happens, \alginc works exactly the same as \algveri. 
If $\langle \langle \Xhaset_2, \dha_2\rangle,\haguard_2(\dha_2,\dha_2'),\\\hareset_2(\dha_2,{\dha_2}') \rangle\in \guardcache$, there exists an entry in $\guardcache$ with key $\langle \langle \Xhaset_1, \dha_1\rangle,\haguard_1(\dha_1,{\dha_1}'),\\\hareset_1(\dha_1,{\dha_1}')\rangle$ from $\exectree_1$ that matches exactly with it and the output from cache will be $v'=\langle\post_{\dha,\dha'}(\Xhaset_1),{\dha_1}' \rangle=\langle\post_{\dha,\dha'}(\Xhaset_2),{\dha_2}' \rangle$.
Similarly for $\flowcache$, given $\langle \Xhaset_2,\dha_2\rangle$, a cache hit can happen only when there exists an entry $\langle \Xhaset_1,\dha_1\rangle$ in cache such that $\dha_1=\dha_2$ and $\Xhaset_1 \supseteq \Xhaset_2$. Therefore, the output from cache will be $\post_{\dha,\delta}(\Xhaset_1) \supseteq \post_{\dha,\delta}(\Xhaset_2)$. 
\end{proof}


\begin{algorithm}[h]
\caption{ }
\label{alg:inc}
\begin{algorithmic}[1] 
	\State \textbf{Global} $\guardcache$, $\flowcache$ 
    \Function{$\algfcache$}{$\sd{}$} \label{alg:flow_cache}
        \If{$\langle\Xhaset,\dha\rangle \notin \flowcache$} \label{alg:inc:flow_cache_query1} $\flowcache(\langle\Xhaset,\dha\rangle) \leftarrow \postcont(\Xhaset,\dha,\delta)$ \label{alg:inc:flow_add} 
        \EndIf            
        \State \Return $\langle \flowcache(\langle\Xhaset,\dha\rangle), \dha\rangle$ \label{alg:inc:flow_cache_hit1}
    \EndFunction \hrulefill 
    \Function{\alginc}{$\langle\Xha^0,\dha^0\rangle,\ha,\treedepth$}
    \State $root \leftarrow \langle\Xha^0,\dha^0\rangle$; $depth \leftarrow 0$
    \State $queue \leftarrow [\langle\Xha^0,\dha^0,depth\rangle]$ \label{alg:inc:root} 
    \While{$queue\neq \emptyset$ and $depth \leq \treedepth$} 
        \State $\langle\sd{},depth\rangle\leftarrow queue.\texttt{dequeue}()$ \label{alg:inc:order}
        \For{$\dha'$, s.t. $\haguard(\dha,\dha')$} \label{alg:inc:exp_trans}
            \State $\langle s,\sdp{}\rangle \leftarrow \guardcache(\sd{}, \haguard(\dha,\dha'), \hareset(\dha,\dha'))$ \Comment{Read $\guardcache$}\label{alg:inc:guard_cache_hit}
            \If{$s\neq \unsat$}  \Comment{if $s=\unsat$ then continue to next $\dha'$}
                \If{s=\unknown}
                    \If{$\Xhaset \cap \haguard(\dha,\dha')\neq \emptyset$} \label{alg:inc:guard_check}
                        \State $\sdp{}\leftarrow\langle \postdisc(\Xhaset,\dha,\dha'), \dha'\rangle$ \label{alg:inc:postdd}
                        \State $\guardcache(\langle \sd{},\haguard(\dha,\dha') , \hareset(\dha,\dha')\rangle) \leftarrow \langle \sat,\sdp{}\rangle$ \label{alg:inc:guard_cache_add1} \Comment{Record in $\guardcache$}
                    \Else
                        \State $\guardcache(\langle \sd{},\haguard(\dha,\dha') , \hareset(\dha,\dha')\rangle) \leftarrow \langle \unsat,\none\rangle$ \label{alg:inc:guard_cache_add2} \Comment{Record in $\guardcache$}
                        \State \textbf{continue}
                    \EndIf
                \EndIf
                \State $\sdpp{}\leftarrow \algfcache(\sdp{})$ 
                \State $\sd{}.\texttt{addChild}(\sdpp{})$ \label{alg:inc:flowcache1}
                \State $queue.\texttt{add}(\langle \sdpp{},depth+1 \rangle)$ \label{alg:inc:queue}
            \EndIf
        \EndFor 
    \EndWhile
    \State \Return root
    \EndFunction
\end{algorithmic}
\end{algorithm}

\section{Experimental Evaluation}
\label{sec:exp}
We evaluate key features and algorithms  in  \ourtool through examples. 
We  consider two types of agents: 
a 4-d ground vehicle with bicycle dynamics and the Stanley controller~\cite{4282788} and a 6-d drone with a NN-controller~\cite{ivanov2019verisig}. 
%
Each of these agents can be fitted with one of two types of decision logic: (1) a collision avoidance logic (CA) by which the agent switches to a different available track when it nears another agent on its own track,  and (2) a simpler non-player vehicle logic (NPV) by which the agent does not react to other agents (and just follows its own track at constant speed). We denote the car agent with CA logic as agent C-CA, drone with  NPV as D-NPV, and so on.  
We use four 2-d maps ($\map1$-$4$) and two 3-d maps $\map5$-$6$. 
$\map1$ and $\map2$ have $3$ and $5$ parallel straight tracks, respectively. $\map3$ has $3$ parallel tracks with circular curve. $\map4$ is  imported from OpenDRIVE. $\map6$ is the figure-8 map used in Section~\ref{sec:example}. 

\subsection{Evaluation of core features}
\label{sec:exp:examples}
\paragraph{Safety analysis with multiple drones in a 3-d map.}
The first example is a scenario with two drones---D-CA agent (red) and D-NPV agent (blue)---in map $\map5$. The safety assertion requires agents to always separate by at least 1m. 
Fig.~\ref{fig:exp:simple_2} ({\em left}) shows the  computed reachable set, its projection on $x$-position, and on $z$ position. Since the agents are separated in space-time, the scenario is verified safe. These plots are generated using \ourtool's plotting functions. 

\begin{figure}[!htb]
        \centering
        \includegraphics[width=0.24\textwidth,height=1.3cm]{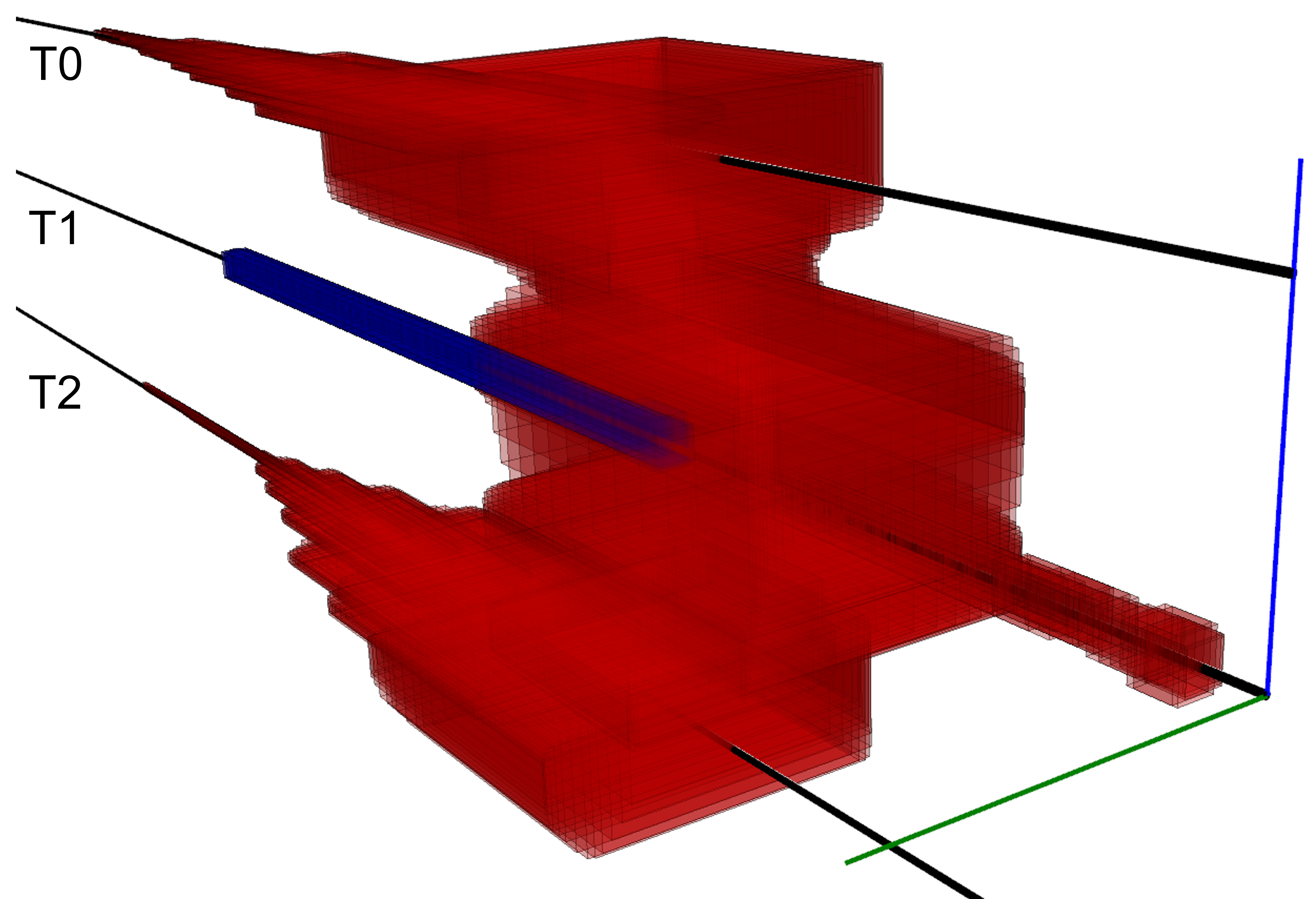}
        \includegraphics[width=0.24\textwidth,height=1.3cm]{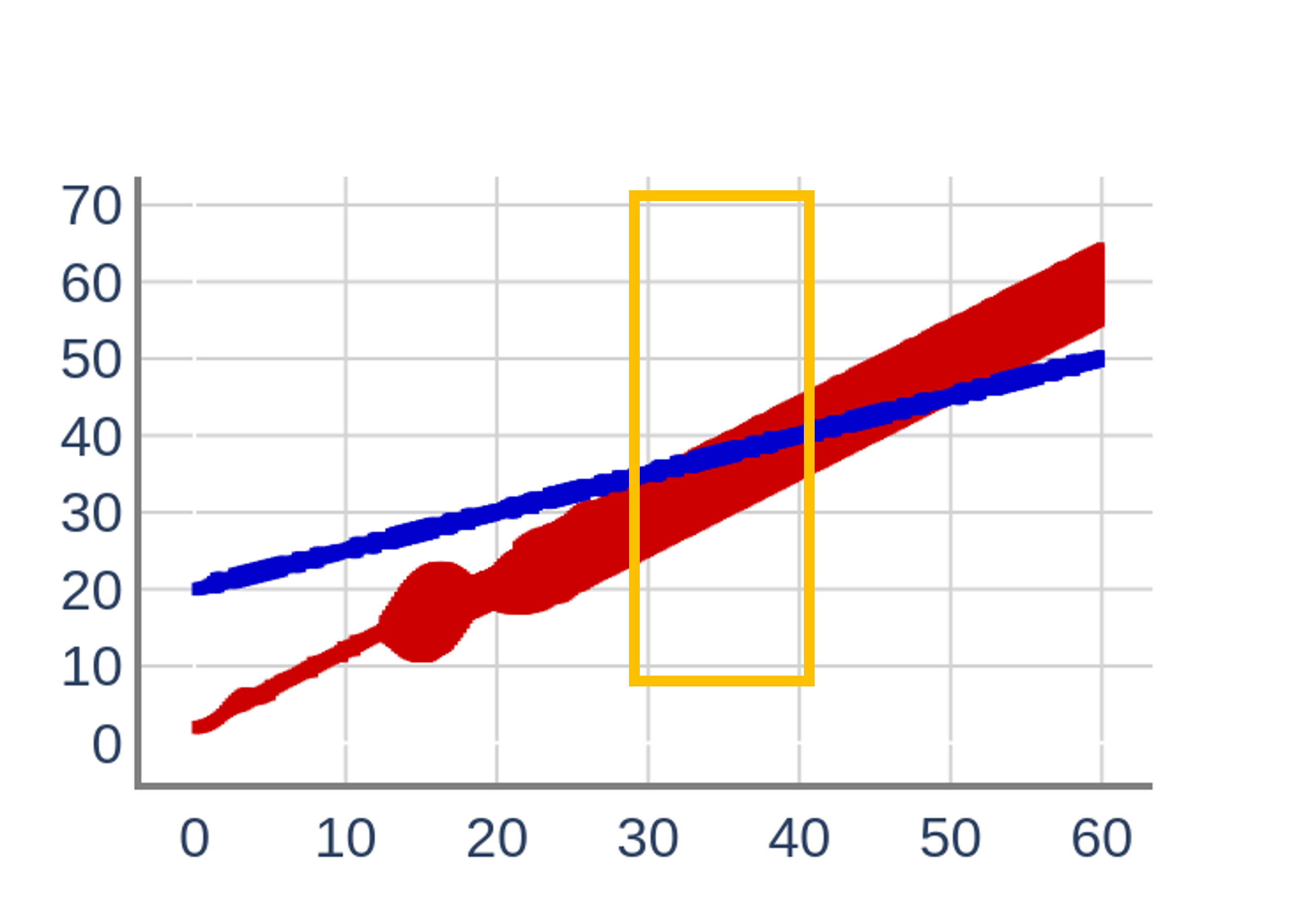}
        \includegraphics[width=0.24\textwidth,height=1.3cm]{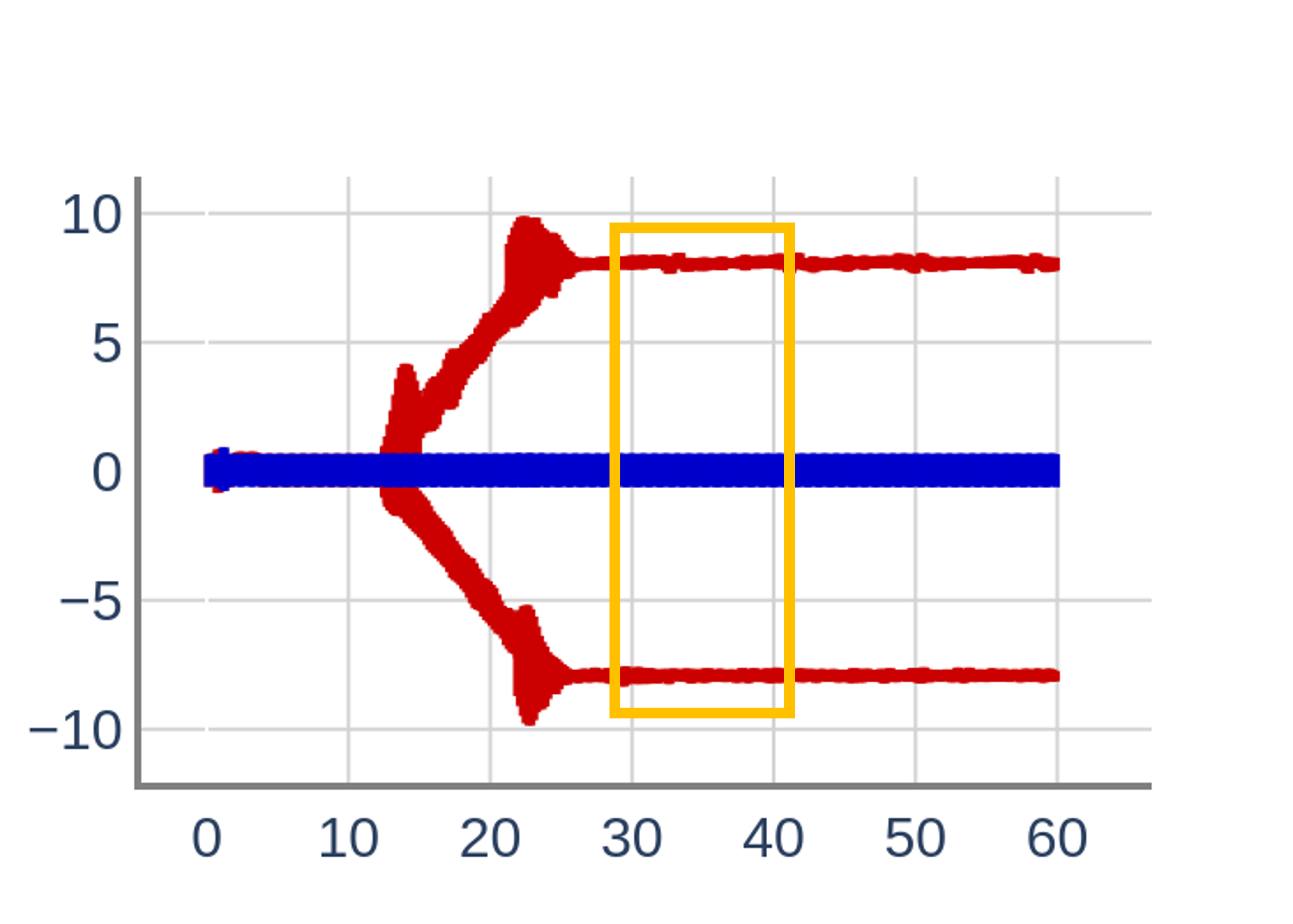}
        \includegraphics[width=0.24\textwidth,height=1.3cm]{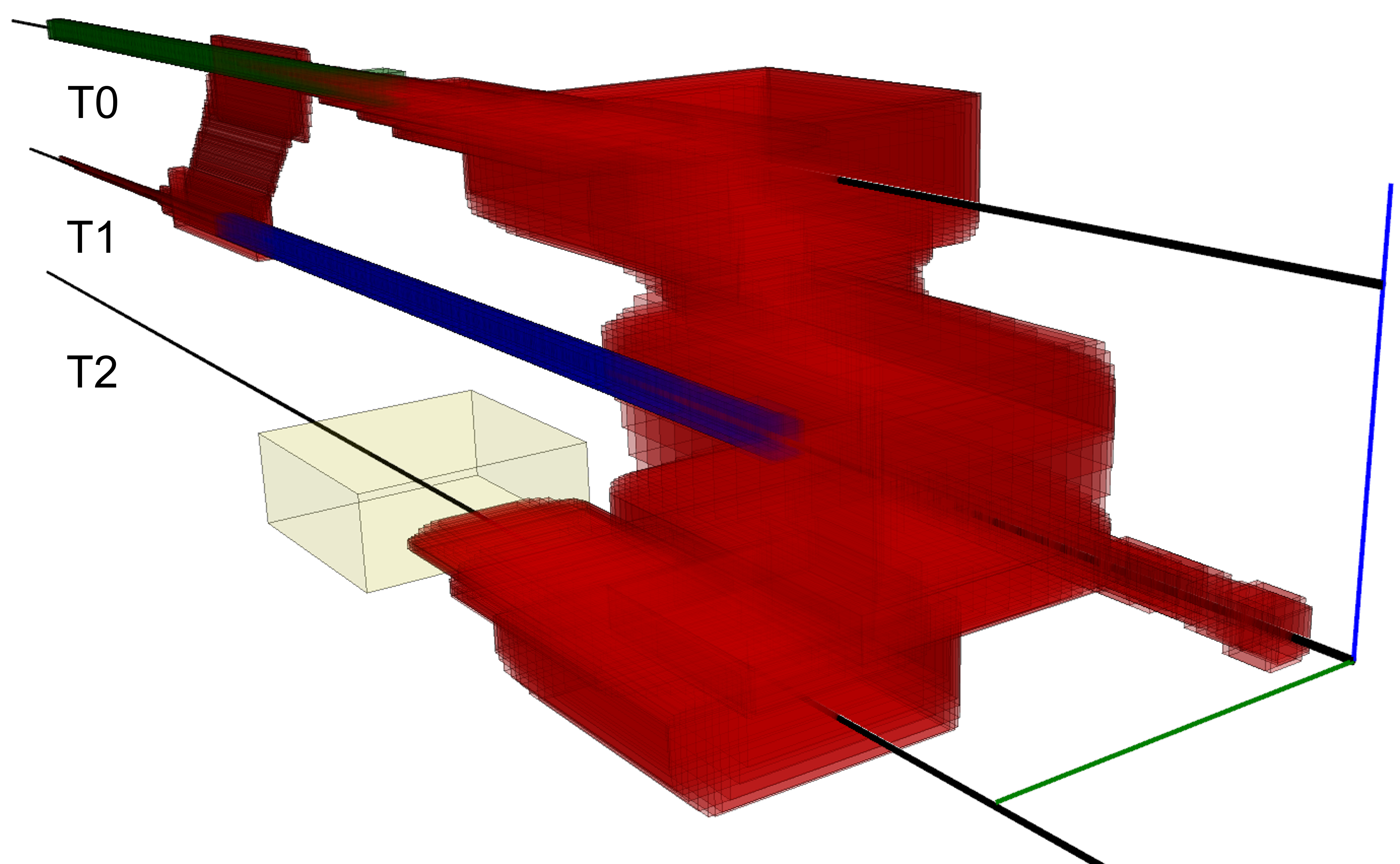}

        \caption{\small Left to right: (1) Computed reachtubes for a 2-drone scenario; (2) same reachtube projected on  x-dimension, and (3) on z-dimension. Since there is no overlap in space-time, no collision. (4) Reachtube for a 3-drone scenario, the red drone violates the safety condition by entering the unsafe region after moving downward.}\label{fig:exp:simple_2}
\end{figure}

\paragraph{Checking multiple safety assertions.} 
\ourtool supports multiple safety assertions specified using $\texttt{assert}$ statements. 
For example, the user can specify unsafe regions (Line 77-80) or check safe separation between agents (Line 81-83) as shown in Fig.~\ref{code:unsafe}.
We extend the two-drone scenario described in the previous paragraph by adding a second D-NPV and both safety assertions. The result is shown in the \yangge{rightmost} Fig.~\ref{fig:exp:simple_2}. In this scenario, D-CA violates the safety property by entering the unsafe region after moving downward to avoid collision.
The behavior of D-CA after moving upward is not influenced. There's no violation of safe separation between agents in this scenario. \ourtool allow \yangge{users} to extract set of reachable states and mode transitions that leads to a safety violation. 


\begin{figure}
    \centering
    \scriptsize
\inputminted
[bgcolor=bg,
linenos,
breaklines,
xleftmargin=20pt,
firstline=77, lastline=83]{python}{code/quadrotor_controller3.txt}
\caption{\small Safety assertions for three drone scenario.}
\label{code:unsafe}
\end{figure}

\paragraph{Changing maps.} \ourtool allows users to easily create scenarios with different maps and port agents across compatible maps. We start with a  scenario with one C-CA agent (red) and two C-NPV agents (blue, green) in $\map1$. The safety assertion  is that the vehicles should be at least 1m apart in both $x$ and $y$-dimensions.  Fig.~\ref{fig:exp:map_safe} ({\em left\/}) shows the  verification result  and safety is not violated. 
However, if we switch to map $\map3$ by changing one line in the scenario definition, a reachability analysis shows that a safety violation can happen after C-CA merges left  Fig.~\ref{fig:exp:map_safe} ({\em center}). 
%
In addition, \ourtool allows importing map from OpenDRIVE \cite{noauthor_asam_nodate} format, which enables users to easily create scenarios with interesting maps. An example is shown in Fig. \ref{fig:exp:opendrive} in Appendix. 


\begin{figure}
    \centering
    \includegraphics[width=0.32\textwidth,height=2.0cm]{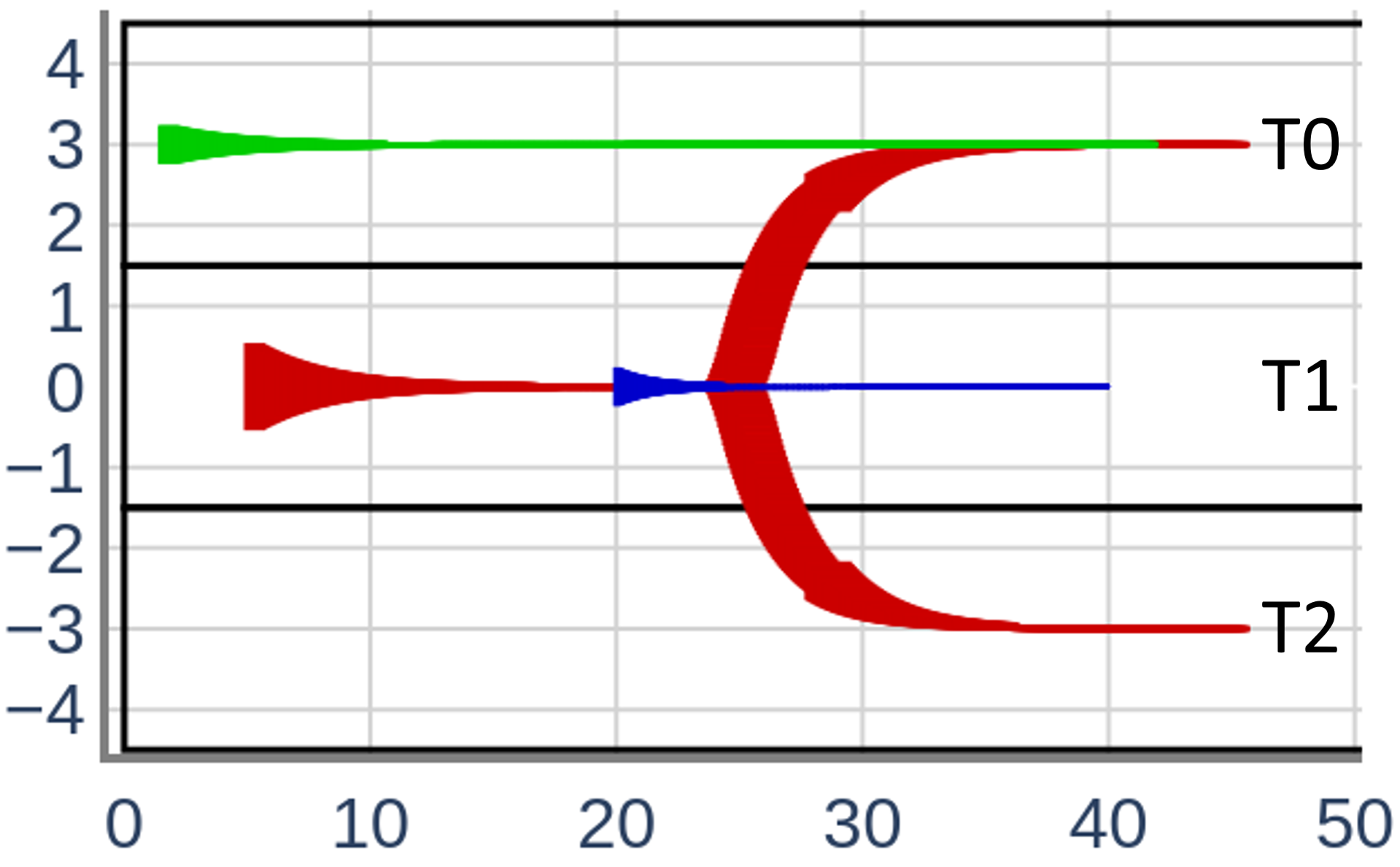}
    \includegraphics[width=0.32\textwidth,height=2.0cm]{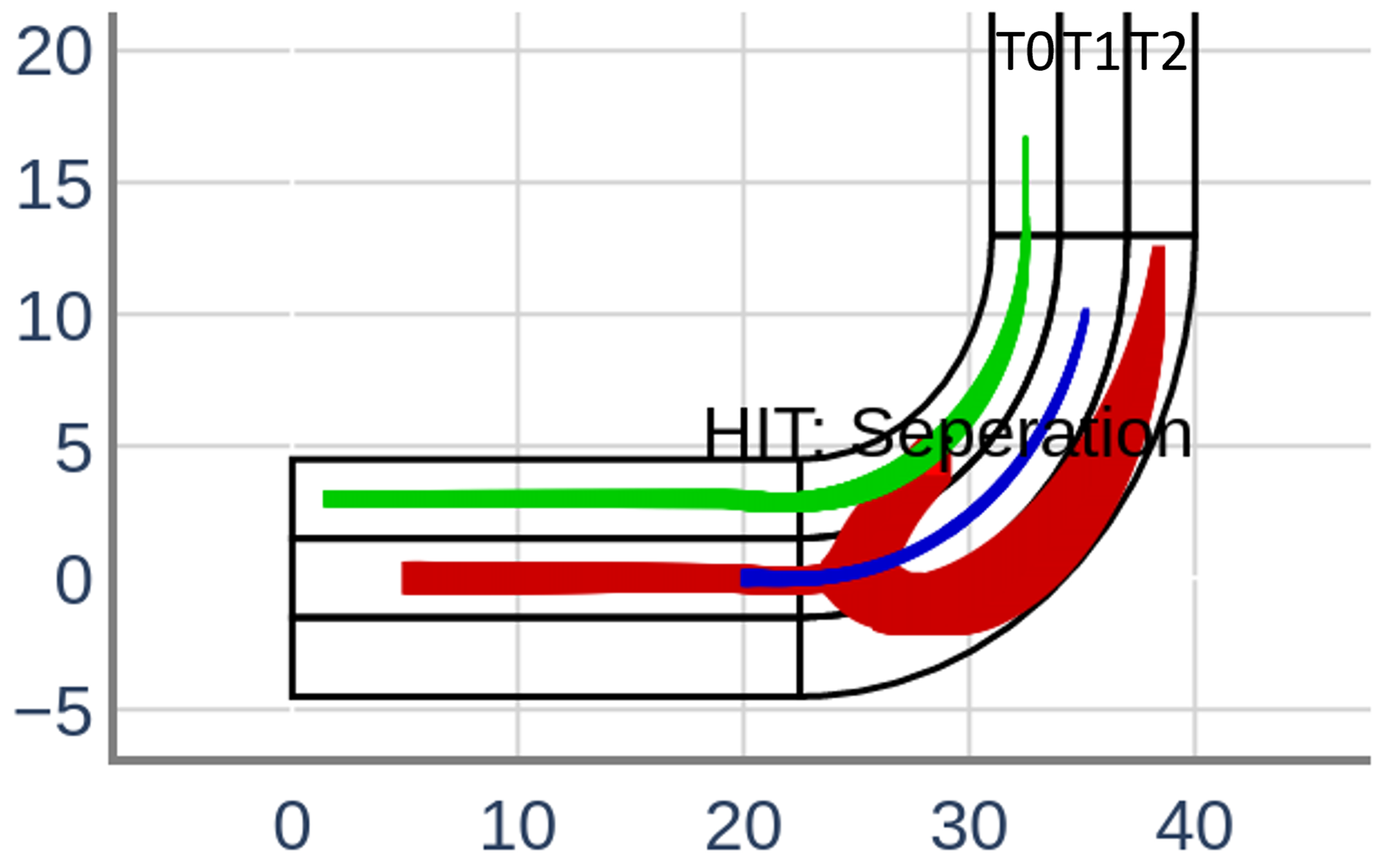}
    \includegraphics[width=0.32\textwidth,height=2.0cm]{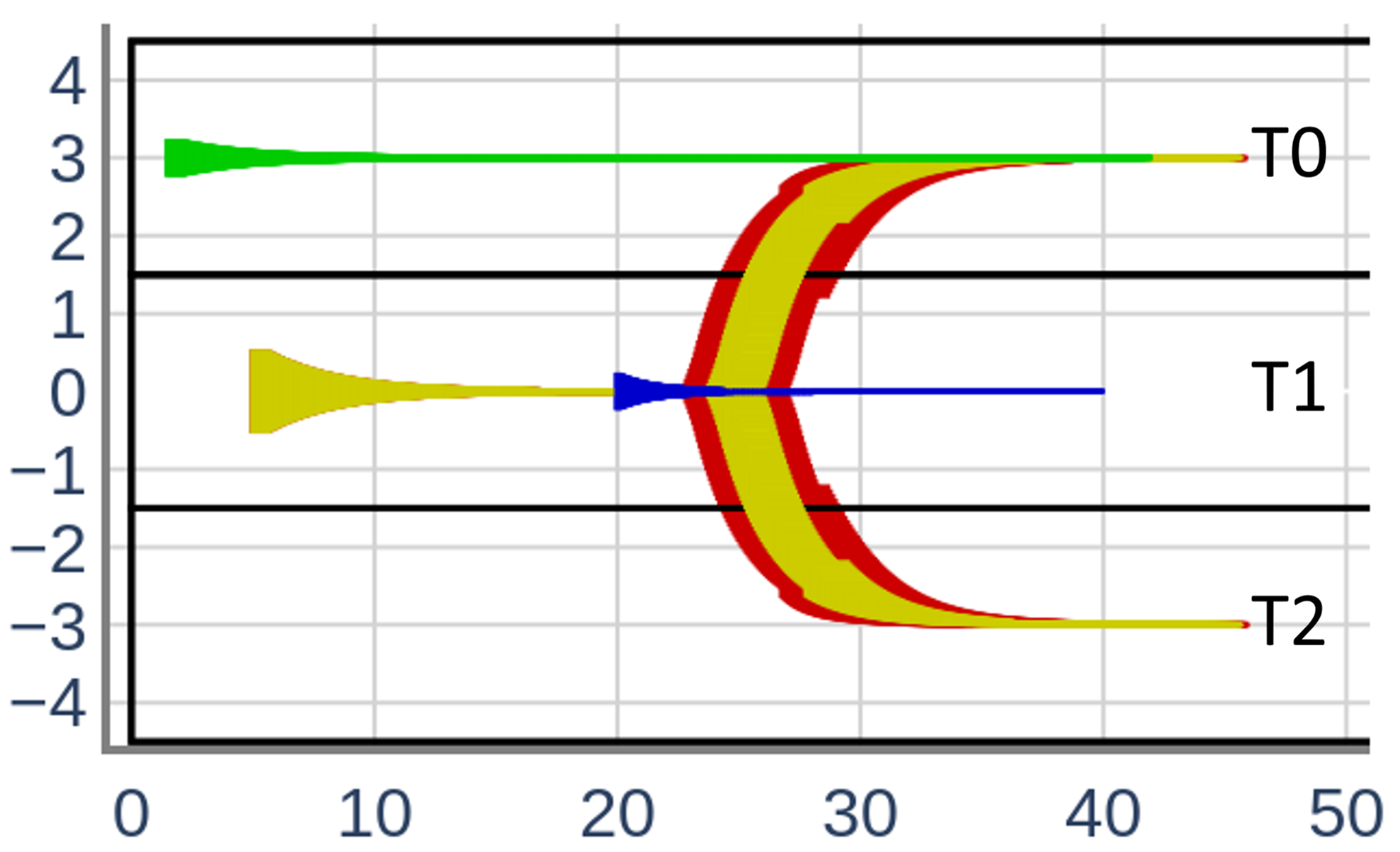}
    \caption{\small \textit{Left:} running the three car scenario on map with parallel straight lanes.  \textit{Center:} same scenario with a curved map. \textit{Right:} same scenario with a noisy sensor.} 
    \label{fig:exp:map_safe}
\end{figure}


\paragraph{Adding noisy sensors.} \ourtool supports the ability to verify scenarios with different sensor functions. For example,  the user can  create a noisy sensor function \yangge{that mimics} a realistic sensor with bounded noise. Such sensor functions are easily added to the scenario using the $\texttt{set\_sensor}$ \ourtool library function.

Fig.~\ref{fig:exp:map_safe} shows exactly the same three-car scenario as in the previous paragraph but with a noisy sensor, which adds $\pm0.5 m$ noise to the perceived position of all other vehicles. 
Since the sensed values of other agents only impacts the checking of the guards (and hence the transitions) of the ego agent, \ourtool internally bloats the reachable set of positions for the other agents by $\pm0.5$ while checking guards.
%
%
Compared with the behavior of the same agent with no sensor noise (shown in yellow in Fig~\ref{fig:exp:map_safe} ({\em right})), the sensor noise enlarges the region over which  the transition can happen, which results in enlarged  reachtubes for the red agent.  




\paragraph{Plugging in different reachability engines.}
With a little effort, \ourtool allows users to plug-in different reachability tools for the \postcont computation. The user will need to modify the interface of the reachability tool  so that given a set of initial states, a mode, and a non negative value $\delta$, the reachability tool can output the set of reachable states over a $\delta$-period represented by a set of timed hyperrectangles.
%
Currently, \ourtool implements computing $\postcont$ using DryVR~\cite{dryvr}, NeuReach~\cite{neuReach-SunM22} and Mixed Monotone Decomposition~\cite{9304391}. 
%
A scenario with two car agents in map $\map1$ verified using NeuReach and DryVR is shown in Fig.~\ref{fig:exp:neureach_post} in the Appendix.

Table~\ref{tab:exps}  summarizes the running time 
of verifying all the  examples in this section on 
a standard Intel Core i7-11700K @ 3.60GHz CPU desktop. 
As expected, the running times increase with the  number of discrete mode transition. However, for complicated scenario with $7$ agents and \yangge{$37$} transitions, the verification can still finish in under $6$ mins,  which suggests some level of scalability. 
The choice of reachability engine can also impact running time. For the same scenario in  rows 2,3 and 10,11, \ourtool with NeuReach\footnote{Run time for NeuReach includes training time.} as the reachability engine takes more time than using DryVR as the reachability engine. 

\begin{table}[!htp]
\centering
\caption{\small Runtime for verifying examples in Section~\ref{sec:exp:examples}. Columns are: number of agents (\#$\agent$), agent type ($\agent$), map used (Map), reachability engine used  ($\postcont$), sensor type (Noisy $\sensor$), number of mode transitions \#TR, and the total run time (Run time).}
\vspace{0.25cm}
\label{tab:exps}
\small
\begin{tabular}{| r | r | r | r | r | r | r | r |}\hline
\small
\#$\agent$&$\agent$ & Map & $\postcont$ & Noisy $\sensor$   &\#Tr& Run time (s)  \\\hline 
2&Q&$\map6$ &DryVR &No  &8 & 55.9\\ 
2&Q&$\map5$ &DryVR &No  &5&  18.7\\ 
2&Q&$\map5$ &NeuReach &No  &5&  1071.2\\ 
3&Q& $\map5$& DryVR &No  &7 & 39.6\\  
7& C& $\map2$ & DryVR &No  &37 &322.7 \\ 
3& C& $\map1$ & DryVR &No  &5& 23.4 \\ 
3& C& $\map3$ & DryVR &No  &4& 34.7 \\ 
3& C& $\map4$ & DryVR &No &7& 118.3 \\ 
3& C& $\map1$ & DryVR &Yes  &5& 29.4 \\ 
2& C& $\map1$ & DryVR &No   &5& 21.6 \\ 
2& C& $\map1$ & NeuReach &No   &5& 914.9 \\ 
\hline
\end{tabular}
\end{table}

\subsection{Incremental Verification}
\label{sec:exp:inc}
The incremental verification algorithm \alginc  is evaluated by repeatedly verifying (and simulating) scenarios with slight modifications. In this example, the scenario contains 3 C-CA agents and 5 C-NPV agents in map $\map2$. 
The simulation and verification results for the scenario are shown in Fig.~\ref{fig:exp:inc} in Appendix \ref{app:examples}. The running time for simulation is  19.96s and that for verification is 470.01s. 
We show the results for  3 experiments for both simulation and verification, which corresponds to the 3 rows in each section of Table~\ref{tab:inc-exp}. In the \textit{repeat} experiments, we perform analysis twice on the same scenarios ($\ha_2 = \ha_1$), which shows the most favorable benefits of incremental verification. In the \textit{change init} and \textit{change ctlr} experiments, we respectively change the initial conditions and the decision logic for one of the C-CA agents between the 2 experiment runs. We perform each of these experiments with incremental verification turned off and on, which corresponds to the \verify and \alginc columns. 

From the \yangge{experimental results} we can see that \alginc
indeed reduces the time needed for simulation and verification. This time reduction is also correlated with the similarity between the scenarios, with the \textit{repeat} experiment being able to achieve an almost 10x speedup, while changing the initial conditions leads to almost no benefit. 

\begin{table}[!h]\centering
\caption{\small \yangge{Experimental results} for the Incremental Verification algorithm. The table shows the number of mode transition (\#Tr), run-times in seconds (run time), the memory usage of \ourtool in megabytes (memory), the size of $\guardcache$ and $\flowcache$ in megabytes (cache size), and the hit rate of the caches (hit rate).} 
\label{tab:inc-exp}
\small 
\begin{tabular}{|c|l|c|c|c|c|c|c|c|}
    \hline
    \multicolumn{3}{|c|}{} & \multicolumn{2}{c|}{\verify} & \multicolumn{4}{c|}{\alginc} \\ \hline
    \multicolumn{2}{|c|}{} & \#Tr &  run time & memory & Run time & memory & cache size & hit rate\\ \hline
    \multirow{3}{5.7em}{\textbf{Simulation}}
    & repeat & 45 & 12.18 & 431 & 1.05 & 435 & 3.83 & 83.33\% \\
    & change init & 24 & 10.17 & 431 & 9.3 & 436 & 4.07 & 75.91\% \\
    & change ctlr & 45 & 11.45 & 429 & 5.98 & 438 & 4.38 & 78.19\% \\
    \hline
    \multirow{3}{5.7em}{\textbf{Verification}}
    & repeat & 105 & 450.89 & 498 & 55.34 & 482 & 3.23 & 76.79\% \\
    & change init & 49 & 365.91 & 485 & 349.68 & 498 & 3.7 & 73.21\% \\
    & change ctlr & 93 & 421.65 & 498 & 230.12 & 490 & 4.0 & 73.44\% \\
    \hline
\end{tabular}
\end{table}

\subsection{White-box uncertain continuous dynamics}
\label{sec:uncertainty}
\ourtool provides an implementation of $\postcont$ using algorithms from~\cite{abate_2020,9304461,9304391} which is applicable to agents with white-box  uncertain continuous dynamics. Given a white-box dynamical system defined by a differential equation 
$\dot{x} = f(x,w)$,  where the state $x\in\mathcal{X}\subseteq \reals^n$ and the bounded disturbance input $w \in [w_l,w_u]$. The idea of the approach in~\cite{9304391} is to find a decomposition function $d()$
which can then be used to define an augmented system: 
\[
    \begin{bmatrix}
    \dot{x}_l\\\dot{x}_u
    \end{bmatrix} = \begin{bmatrix}
    d(x_l,w_l,x_u,w_u) \\ d(x_u,w_u,x_l,w_l)
    \end{bmatrix},
\]
such that the reachable set of the original noisy system at time $T$, from any initial set $\mathcal{X}_0 = [\underline{x},\overline{x}]\subseteq \mathcal{X}$  is contained in $[{x_l}(T),{x_u}(T)]$. The latter can be computed by simulating the decomposed system  from the singleton initial state $\langle x_l = \underline{x}, x_u = \overline{x} \rangle$.
%
%
\ourtool currently requires users to provide the decomposition function $d()$ to handle systems with uncertainty in dynamics. Consider an example system and the corresponding manually derived decomposition function: 
\begin{minipage}{0.39\textwidth}
\[
\begin{split}
    & \dot{x}_1 = x_1(1.1+w_1-x_1-0.1x_2)\\
        &\dot{x}_2 = x_2(4+w_2-3x_1-x_2)
\end{split}
\]
\end{minipage}
\begin{minipage}{0.59\textwidth}
\[
d(x,\hat{x},w,\hat{w}) = \begin{bmatrix}
        x_1(1.1+w_1-x_1-0.1\hat{x}_2) \\
        x_2(4+w_2-3\hat{x}_1-x_2)
    \end{bmatrix}
\]
\end{minipage}

With $x_1, x_2\in [0.3, 2]$ and $w_1, w_2\in[-0.1, 0.1]$. \ourtool can automatically construct the augmented system and compute the reachable states for that system (Sample results are shown in Fig.~\ref{fig:exp:uncertain_dynamics} of Appendix~\ref{app:examples}).


\section{Conclusions and future directions}
\label{sec:limits}
In this paper, we presented the new open source \ourtool library for broadening applications of hybrid system verification technologies to scenarios involving multiple interacting decision-making agents. 
\ourtool allows users to  create  agents with decision logics expressed in Python, scenarios with different types of agents, and it  provides functions for performing systematic simulation and verification through reachability analysis. \ourtool maps can be imported from a standard open format, and they allow  agents to be ported across different compatible maps, offering  flexibility in creating scenarios.
The  agent decision logics allow non-deterministic decision making and the reachability functions can propagate the uncertainty in the initial condition through all decision branches. The safety requirements written using $\texttt{assert}$ statements enable various safety conditions. The incremental verification algorithm in \ourtool enables the user to rapidly verify and improve their decision logic. We illustrate useful capabilities and use cases of \ourtool through various examples.

There are several exciting future directions for and around \ourtool. \ourtool currently assumes all agents interact with each other only through the sensor in the scenario and all agents share the same sensor. This restriction could be relaxed  to have different types of asymmetric sensors. 
In incremental verification (and simulation), our \alginc merely opens the door for a \yangge{invention} that  exploit small changes in maps and continuous dynamics.
Functions for constructing and systematically sampling scenarios could be developed. Functions for post-computation for white-box models by building connections with existing tools~\cite{Althoff2015a,flow,FanQM0D16} would be a natural next step. Those approaches could obviously  utilize the symmetry property of agent dynamics as in~\cite{sibai2021scenechecker,sibai-atva-2019}, but beyond that, new types of symmetry reductions should be possibile by exploiting the map geometry.   


%
%
%
%




\bibliographystyle{splncs04}
\bibliography{egbib,sayan1}

\appendix
\newpage

\section{Example Maps}
The figure for maps used in the examples in Section \ref{sec:exp} are shown in Fig. \ref{fig:map}. 
\begin{figure}[!htb]
     \begin{subfigure}{0.48\textwidth}
        \centering
        \includegraphics[width=\textwidth]{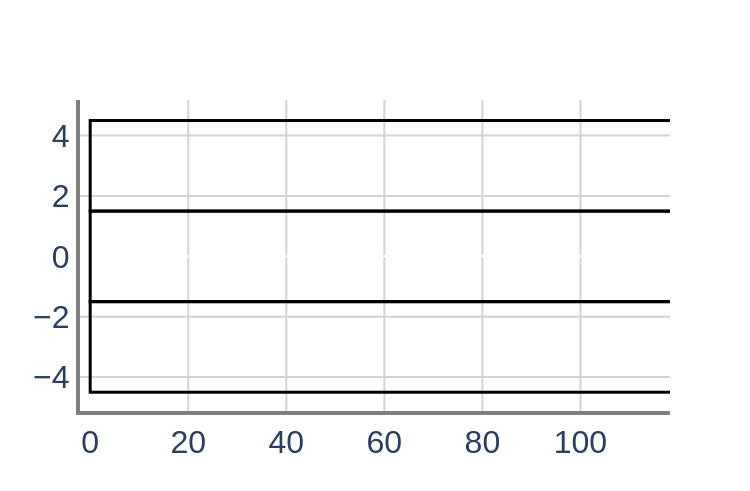}
        \caption{$\map1$}\label{fig:map:m1}
    \end{subfigure}\hfill
    \begin{subfigure}{0.48\textwidth}
        \centering
        \includegraphics[width=\textwidth]{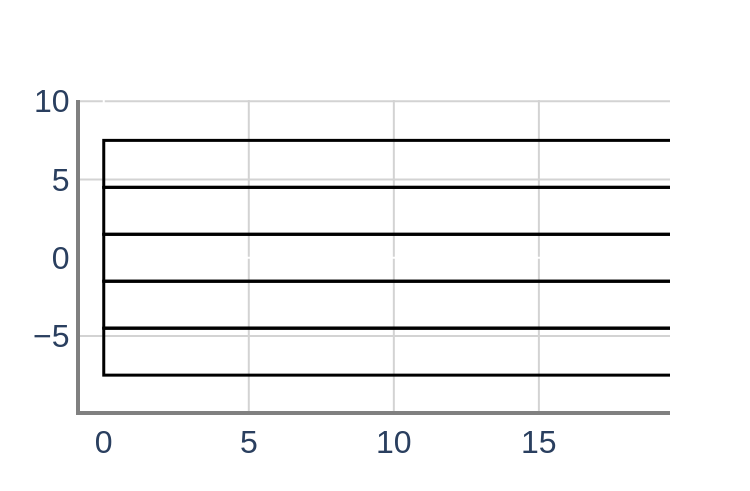}
        \caption{$\map2$}\label{fig:map:m2}
    \end{subfigure}\hfill
    \begin{subfigure}{0.48\textwidth}
        \centering
        \includegraphics[width=\textwidth]{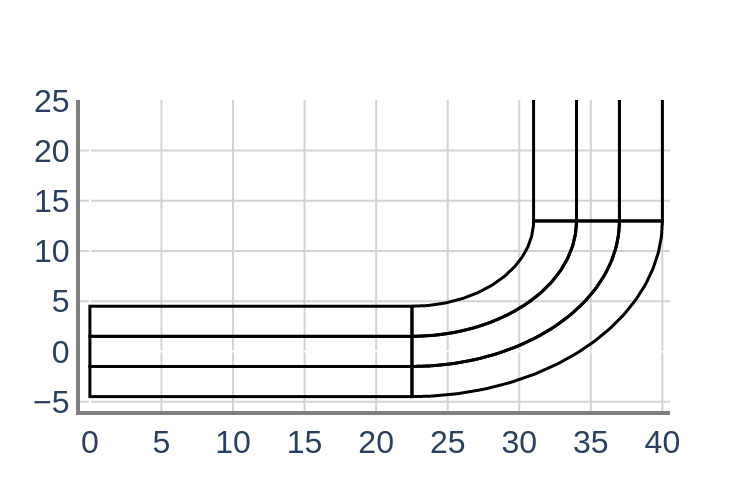}
        \caption{$\map3$}\label{fig:map:m3}
    \end{subfigure}\hfill
    \begin{subfigure}{0.48\textwidth}
        \centering
        \includegraphics[width=\textwidth]{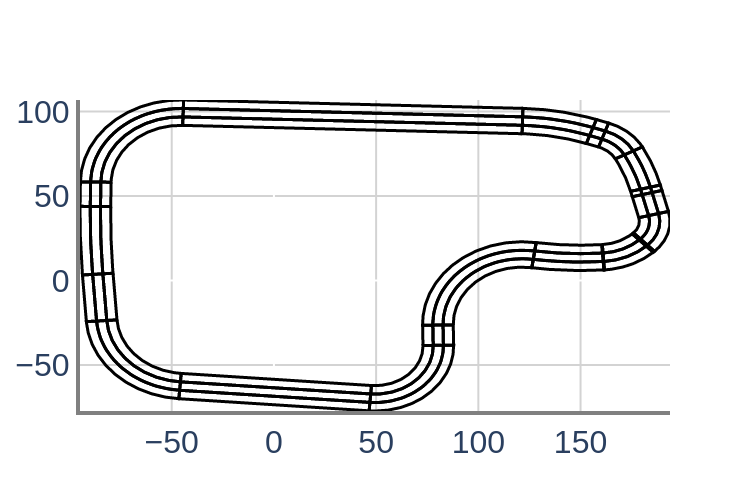}
        \caption{$\map4$}\label{fig:map:m4}
    \end{subfigure}\hfill
    \begin{subfigure}{0.48\textwidth}
        \centering
        \includegraphics[width=\textwidth]{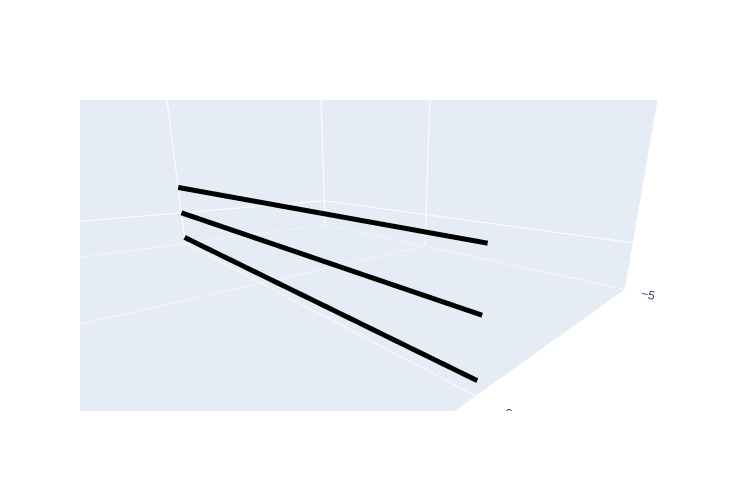}
        \caption{$\map5$}\label{fig:map:m5}
    \end{subfigure}\hfill
    \begin{subfigure}{0.48\textwidth}
        \centering
        \includegraphics[width=\textwidth]{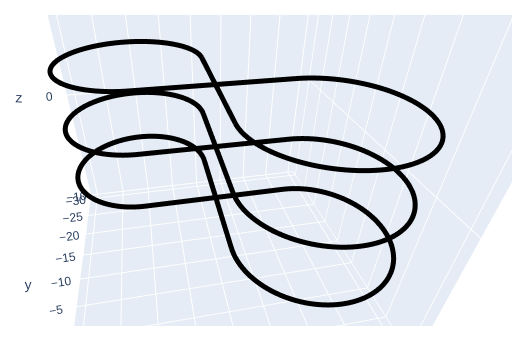}
        \caption{$\map6$}\label{fig:map:m6}
    \end{subfigure}\hfill
\caption{Maps in examples}\label{fig:map}
\end{figure}

\section{Grammar for Decision Logic Code}\label{sec:parser_grammar}

This is the subset of the Python grammar~\cite{python_grammar} that \ourtool support for coding the decision logic of agents. Some details about operator precedence are not given here. Currently, the supported \texttt{OPERATOR} include logic operators (\texttt{and}, \texttt{or}, and \texttt{not}), comparison operators (\texttt{<=}, \texttt{<}, \texttt{==}, \texttt{!=}, \texttt{>}, and \texttt{>=}), and arithmetic operators (\texttt{+}, \texttt{-}, \texttt{*} and \texttt{/}).


\begin{figure}
    \centering
    \scriptsize
\begin{minted}{peg.py:CustomPegLexer -x}
expression:
    | expression OPERATOR expression
    | 'lambda' parameters ':' expression
    | literal
    | dotted_name tuple
literal:
    | ['-'] NUMBER
    | STRING+
    | 'None' | 'True' | 'False' 
    | tuple | group | genexp
expressions: ','.expression*
tuple: '(' expressions ')'
group: '(' expression ')'
genexp: '(' expression for_if_clauses+ ')'
for_if_clause: 'for' IDENT 'in' expression ('if' expression )*
function_def: 'def' IDENT '(' [params] ')' ['->' expression ] ':' block
params: (maybe_typed_name ',')* maybe_typed_name?
maybe_typed_name: IDENT [':' (expression | STRING)]
block:
    | NEWLINE INDENT statements DEDENT
    | simple_stmts
statements: statement+
statement: compound_stmt | simple_stmts
statement_newline: compound_stmt NEWLINE | simple_stmts | NEWLINE | EOF
compound_stmt: function_def | if_stmt | class_def
simple_stmts: ';'.simple_stmt+ [';'] NEWLINE 
simple_stmt: assignment | return_stmt | assert_stmt | if_stmt
assignment: IDENT '=' expression
return_stmt: 'return' [expression]
assert_stmt: 'assert' expression [',' expression ] 
if_stmt: 'if' expression ':' block [else_block] 
else_block: 'else' ':' block
dotted_as_names: ','.dotted_as_name+ 
dotted_as_name: dotted_name ['as' NAME ] 
dotted_name: dotted_name '.' NAME | NAME

\end{minted}
\end{figure}

\section{Additional Examples}
The verification result for a 3-car scenario with a map imported from OpenDRIVE format is shown in Fig. \ref{fig:exp:opendrive}.
\label{app:examples}
\begin{figure}
    \centering
    \includegraphics[width=0.48\textwidth]{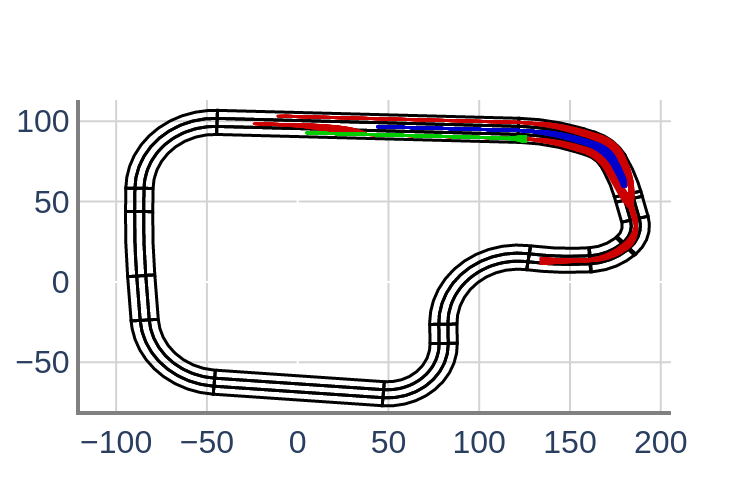} 
    \includegraphics[width=0.48\textwidth]{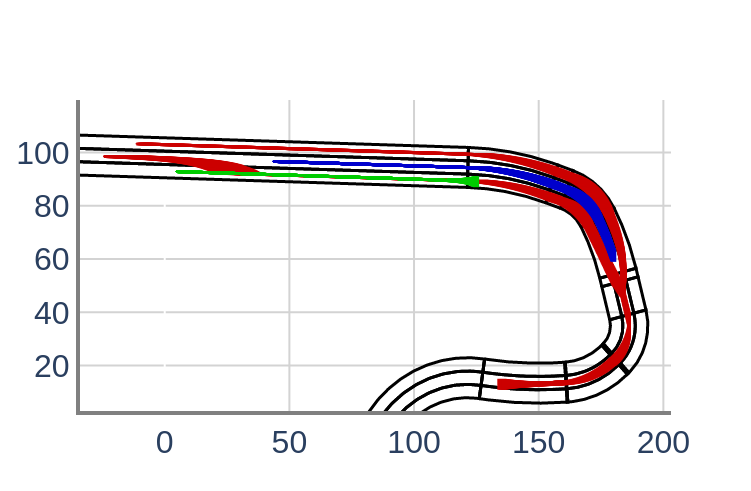}
    \caption{Picture showing verification result for a scenario with map imported from OpenDRIVE format. }
    \label{fig:exp:opendrive}
\end{figure}

The verification result with $\postcont$ computed by NeuReach and DryVR is shown in Fig. \ref{fig:exp:neureach_post}.
\begin{figure}
    \centering
    \includegraphics[width=0.48\textwidth]{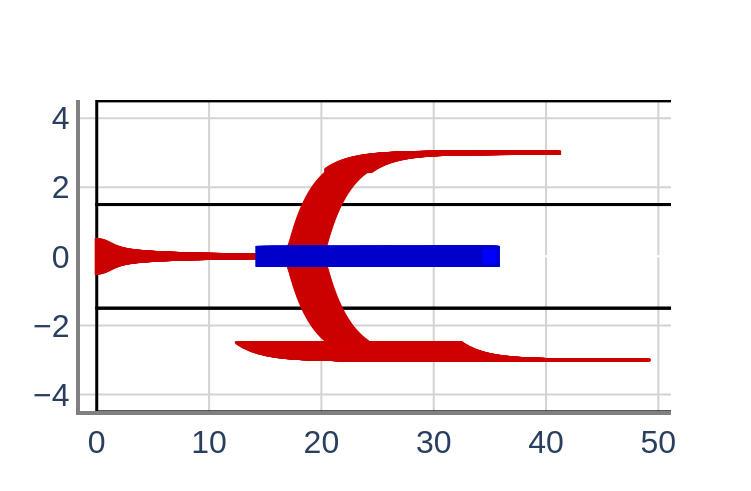} 
    \includegraphics[width=0.48\textwidth]{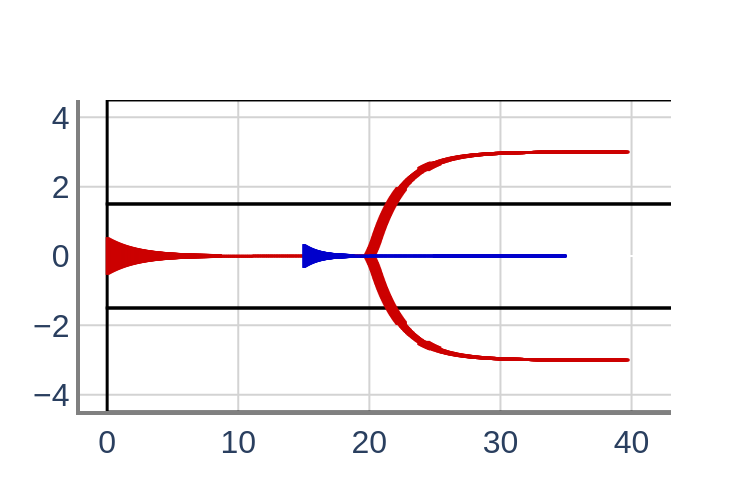}
    \caption{Picture on the left showing scenario with $\postcont$ computed by NueReach. Picture on the right showing same scenario with $\postcont$ computed by DryVR. }
    \label{fig:exp:neureach_post}
\end{figure}

The ploted reachtube for $x_1$ and $x_2$ for system with uncertain dynamics in Section \ref{sec:uncertainty} is shown in Fig. \ref{fig:exp:uncertain_dynamics}.
\begin{figure}
    \centering
    \includegraphics[width=0.48\textwidth]{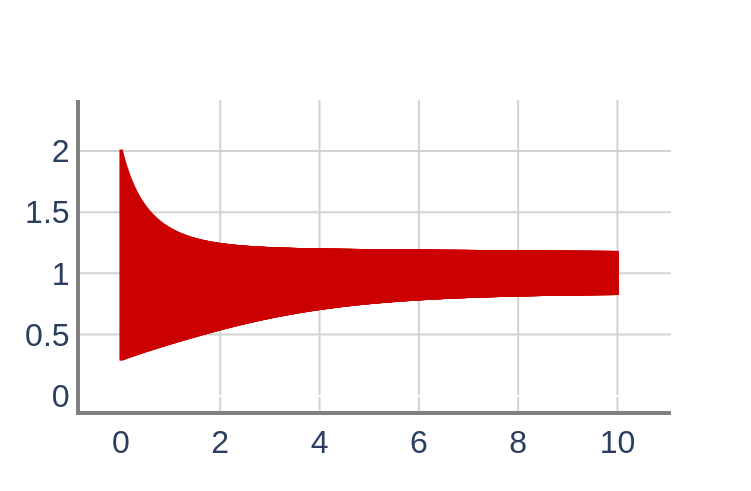}
    \includegraphics[width=0.48\textwidth]{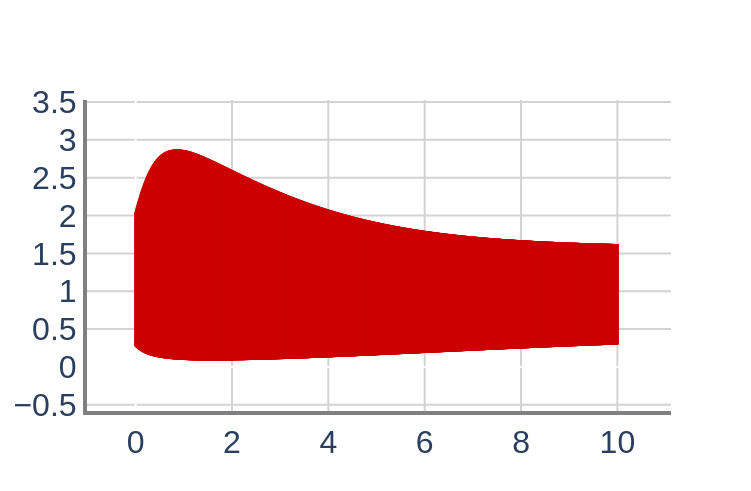}
    \caption{Reachtube plot for $x_1$ (left) and $x_2$ (right) for the example system.}
    \label{fig:exp:uncertain_dynamics}
\end{figure}

The reachtube plot for the 7-car scenario described in row 5 of Table \ref{tab:exps} is shown in Fig. \ref{fig:exp:7-car}. 
\begin{figure}
    \centering
    \includegraphics[width=0.48\textwidth]{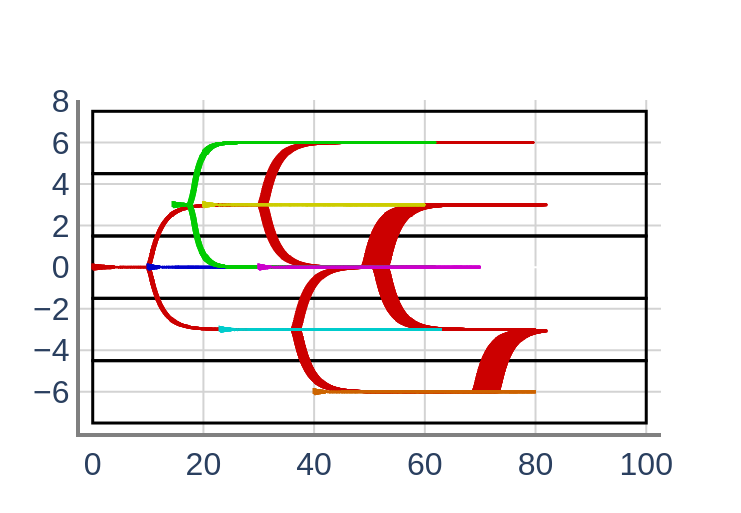}
    \caption{Reachtube plot for 7-car scenario}
    \label{fig:exp:7-car}
\end{figure}

The simulation and verification result for the baseline 8-car system used for experimenting incremental verification in Section \ref{sec:exp:inc} is shown in Fig. \ref{fig:exp:inc}
\begin{figure}
    \centering
    \includegraphics[width=0.48\textwidth]{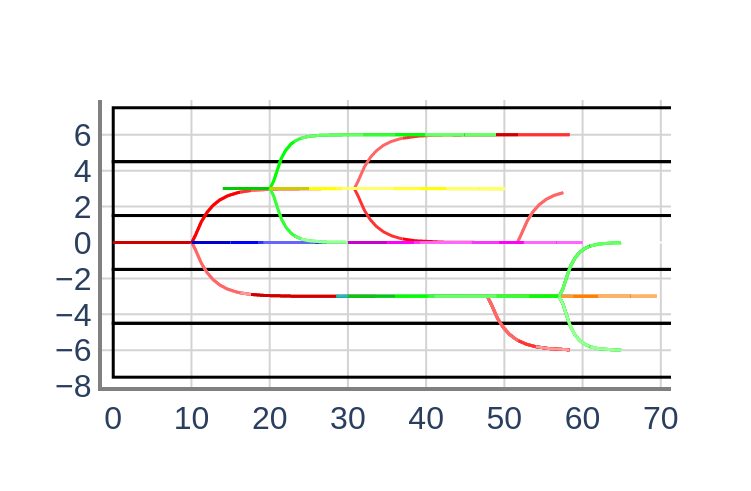}
    \includegraphics[width=0.48\textwidth]{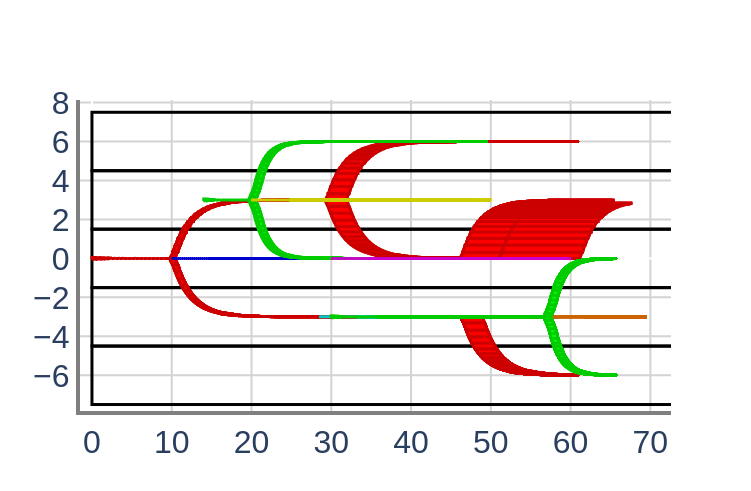}
    \caption{Simulation (left) and reachtube plot (right) for scenario in incremental verification experiment.}
    \label{fig:exp:inc}
\end{figure}

The baseline scenario for the incremental verification experiments is specified in Fig. \ref{fig:inc-scene-base-sim} for simulation and in Fig. \ref{fig:inc-scene-base-veri} for verification. In the \textit{change init} experiment, the initial condition for \texttt{car7} is changed to \texttt{[[50, -3, 0, 0.5], [50, -3, 0, 0.5]]}. In the \textit{change ctlr} experiment, the controller for \texttt{car8} is changed so that it switches tracks if there is another car less than 4.5 meters in front of it instead of 5 meters.
\begin{figure}
    \centering
    \scriptsize
    \caption{The scenario specification for the simulation baseline in the incremental verification experiments}
    \label{fig:inc-scene-base-sim}
    \begin{minted}{python}
    scenario = Scenario()
    controller_file = '...'
    car1 = CarAgent('car1', file_name=controller_file)
    car1.set_initial([[0, 0, 0, 1.0], [0, 0, 0, 1.0]], (TacticalMode.Normal, TrackMode.T1))
    scenario.add_agent(car1)
    car2 = NPCAgent('car2')
    car2.set_initial([[10, 0, 0, 0.5], [10, 0, 0, 0.5]], (TacticalMode.Normal, TrackMode.T1))
    scenario.add_agent(car2)
    car3 = CarAgent('car3', file_name=controller_file)
    car3.set_initial([[14, 3, 0, 0.6], [14, 3, 0, 0.6]], (TacticalMode.Normal, TrackMode.T0))
    scenario.add_agent(car3)
    car4 = NPCAgent('car4')
    car4.set_initial([[20, 3, 0, 0.5], [20, 3, 0, 0.5]], (TacticalMode.Normal, TrackMode.T0))
    scenario.add_agent(car4)
    car5 = NPCAgent('car5')
    car5.set_initial([[30, 0, 0, 0.5], [30, 0, 0, 0.5]], (TacticalMode.Normal, TrackMode.T1))
    scenario.add_agent(car5)
    car6 = NPCAgent('car6')
    car6.set_initial([[28.5, -3, 0, 0.5], [28.5, -3, 0, 0.5]], (TacticalMode.Normal, TrackMode.T2))
    scenario.add_agent(car6)
    car7 = NPCAgent('car7')
    car7.set_initial([[39.5, -3, 0, 0.5], [39.5, -3, 0, 0.5]], (TacticalMode.Normal, TrackMode.T2))
    scenario.add_agent(car7)
    car8 = CarAgent('car8', file_name=controller_file)
    car8.set_initial([[30, -3, 0, 0.6], [30, -3, 0, 0.6]], (TacticalMode.Normal, TrackMode.T2))
    scenario.add_agent(car8)
    \end{minted}
\end{figure}

\begin{figure}
    \centering
    \scriptsize
    \caption{The scenario specification for the verification baseline in the incremental verification experiments}
    \label{fig:inc-scene-base-veri}
    \begin{minted}{python}
    scenario = Scenario()
    controller_file = '...'
    car1 = CarAgent('car1', file_name=controller_file)
    car1.set_initial([[0, -0.05, 0, 1.0], [0, 0.05, 0, 1.0]], (TacticalMode.Normal, TrackMode.T1))
    scenario.add_agent(car1)
    car2 = NPCAgent('car2')
    car2.set_initial([[10, 0, 0, 0.5], [10, 0, 0, 0.5]], (TacticalMode.Normal, TrackMode.T1))
    scenario.add_agent(car2)
    car3 = CarAgent('car3', file_name=controller_file)
    car3.set_initial([[14, 2.95, 0, 0.6], [14, 3.05, 0, 0.6]], (TacticalMode.Normal, TrackMode.T0))
    scenario.add_agent(car3)
    car4 = NPCAgent('car4')
    car4.set_initial([[20, 3, 0, 0.5], [20, 3, 0, 0.5]], (TacticalMode.Normal, TrackMode.T0))
    scenario.add_agent(car4)
    car5 = NPCAgent('car5')
    car5.set_initial([[30, 0, 0, 0.5], [30, 0, 0, 0.5]], (TacticalMode.Normal, TrackMode.T1))
    scenario.add_agent(car5)
    car6 = NPCAgent('car6')
    car6.set_initial([[28.5, -3, 0, 0.5], [28.5, -3, 0, 0.5]], (TacticalMode.Normal, TrackMode.T2))
    scenario.add_agent(car6)
    car7 = NPCAgent('car7')
    car7.set_initial([[39.5, -3, 0, 0.5], [39.5, -3, 0, 0.5]], (TacticalMode.Normal, TrackMode.T2))
    scenario.add_agent(car7)
    car8 = CarAgent('car8', file_name=controller_file)
    car8.set_initial([[30, -3.05, 0, 0.6], [30, -2.95, 0, 0.6]], (TacticalMode.Normal, TrackMode.T2))
    scenario.add_agent(car8)
    \end{minted}
\end{figure}

The simulation and verification result for the 8-car system after changing initial condition is shown in Fig. \ref{fig:exp:inc_init}. The agent with changed controller is marked with the red box in the plot. 
\begin{figure}
    \centering
    \includegraphics[width=0.48\textwidth]{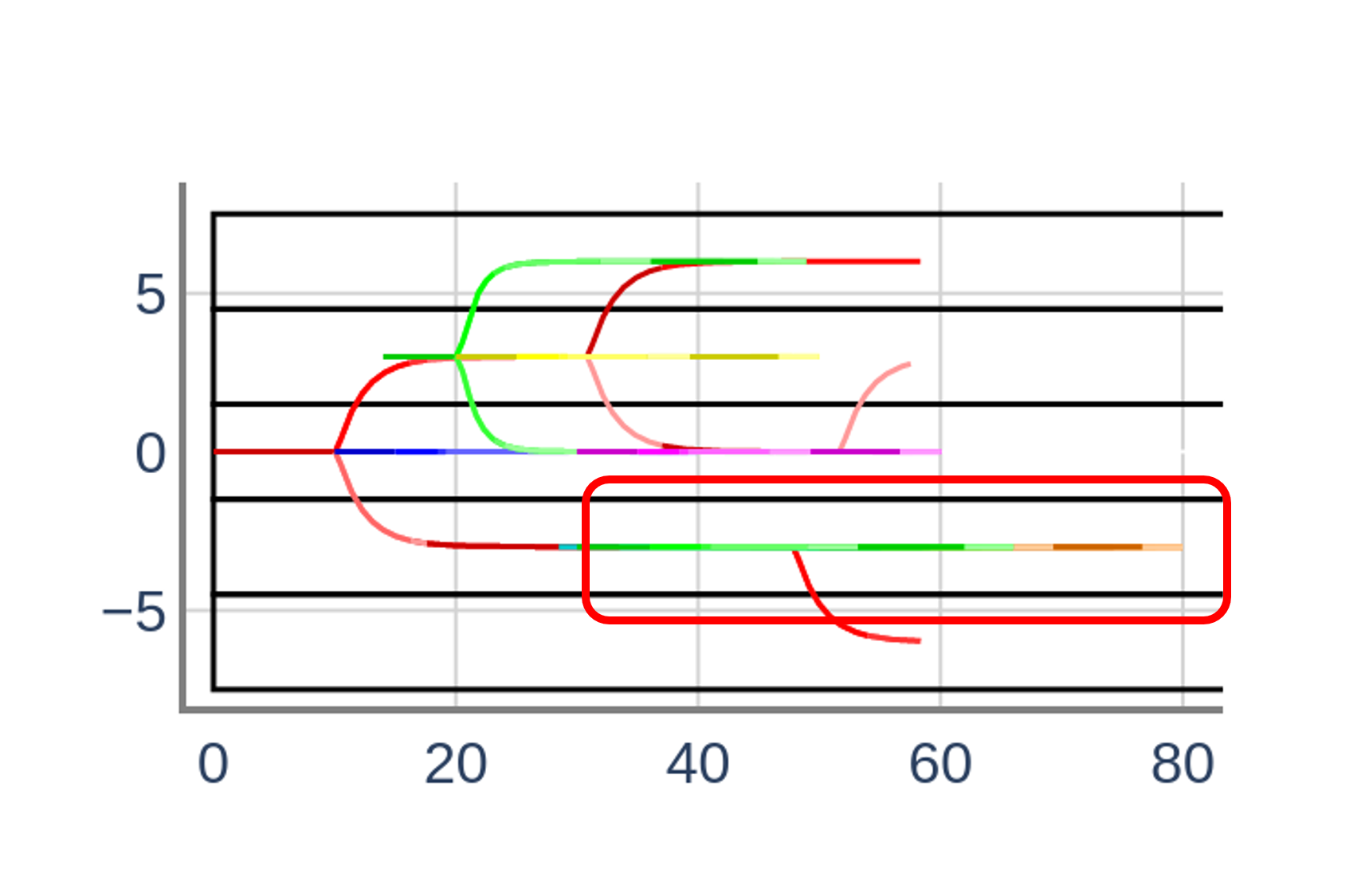}
    \includegraphics[width=0.48\textwidth]{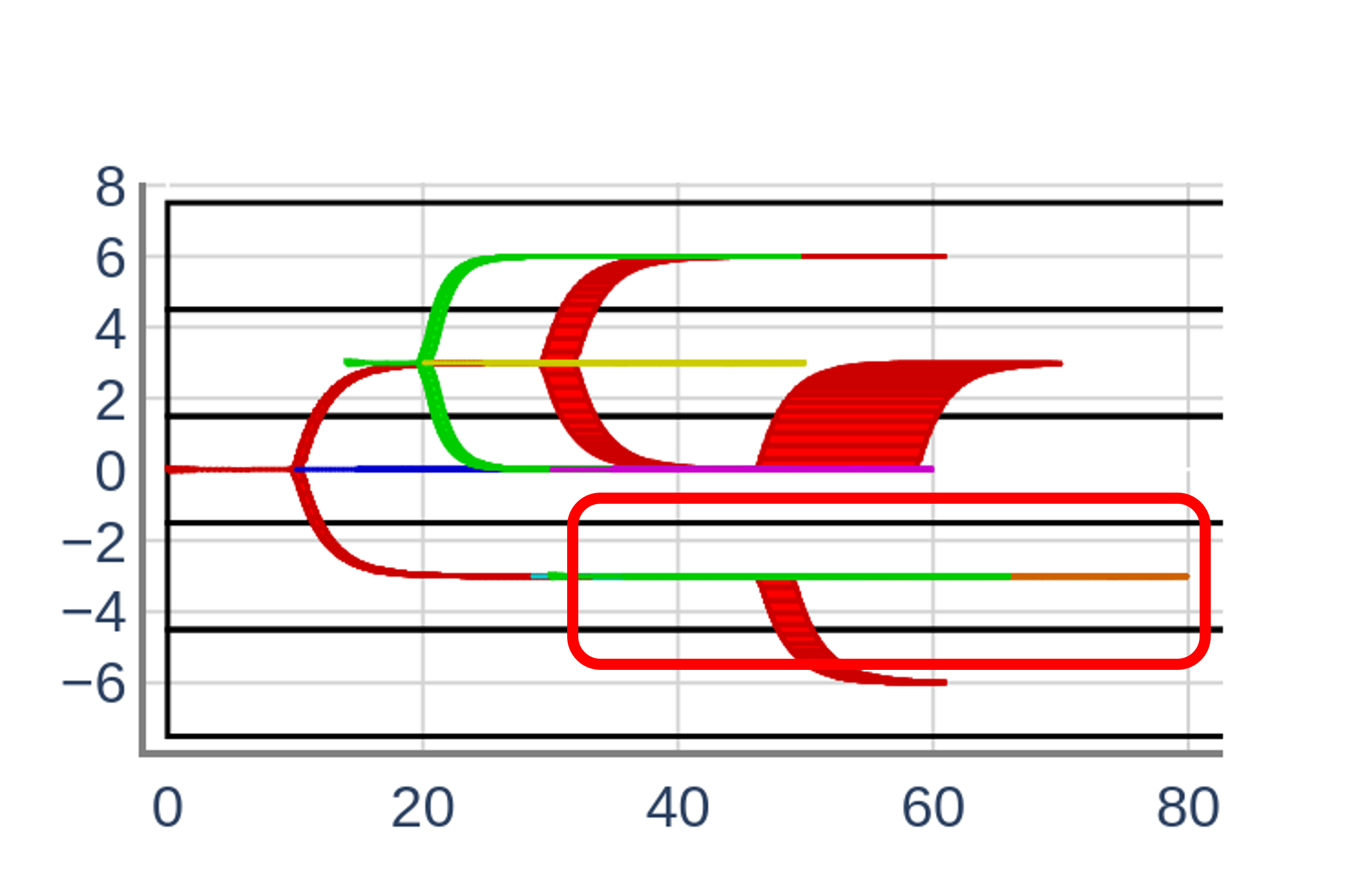}
    \caption{Simulation (left) and reachtube plot (right) for scenario after changing initial condition in incremental verification experiment.}
    \label{fig:exp:inc_init}
\end{figure}

The simulation and verification result for the 8-car system after changing controller is shown in Fig. \ref{fig:exp:inc_ctrl}. The agent with changed controller is marked with the red box in the plot.
\begin{figure}
    \centering
    \includegraphics[width=0.48\textwidth]{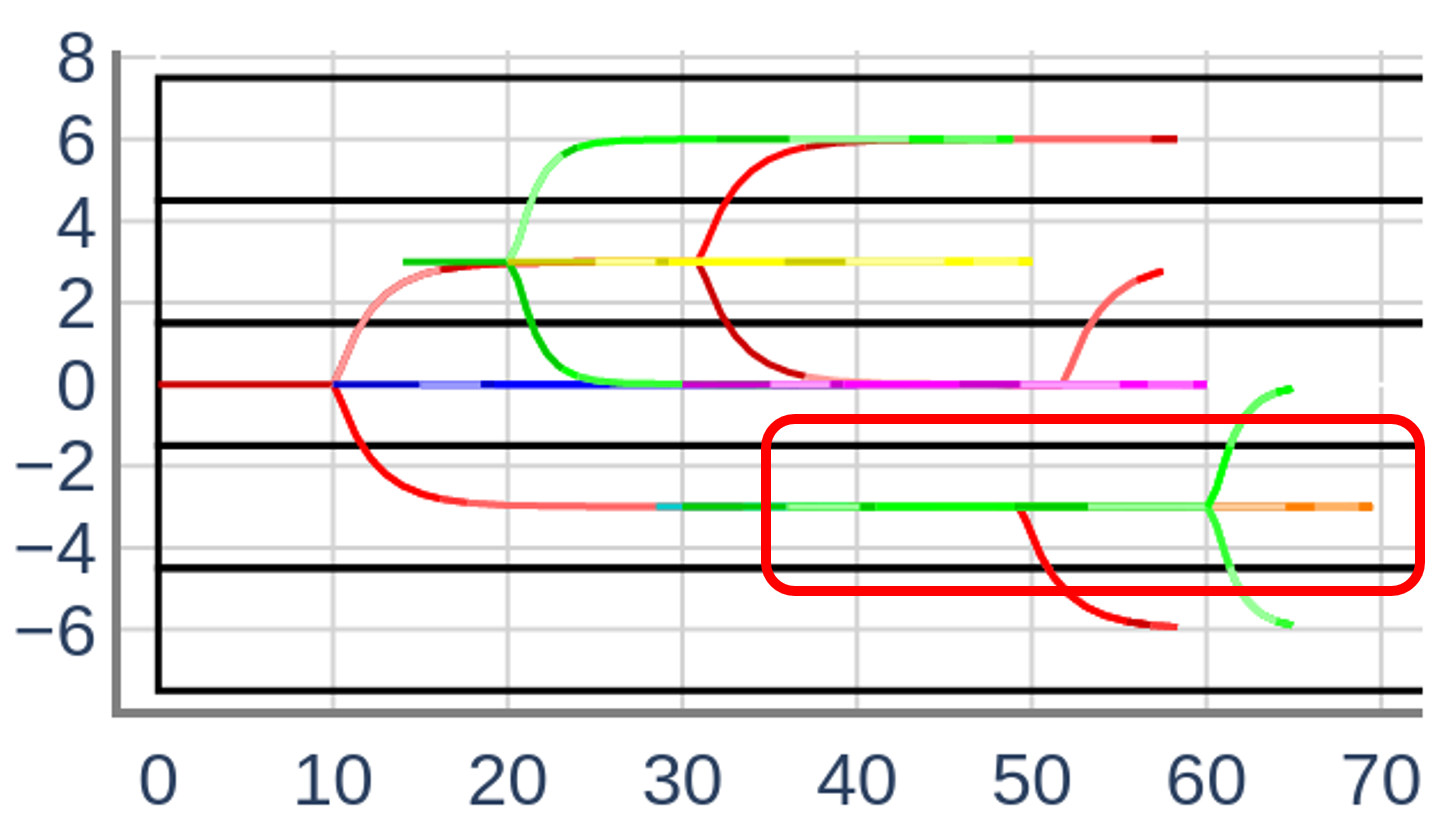}
    \includegraphics[width=0.48\textwidth]{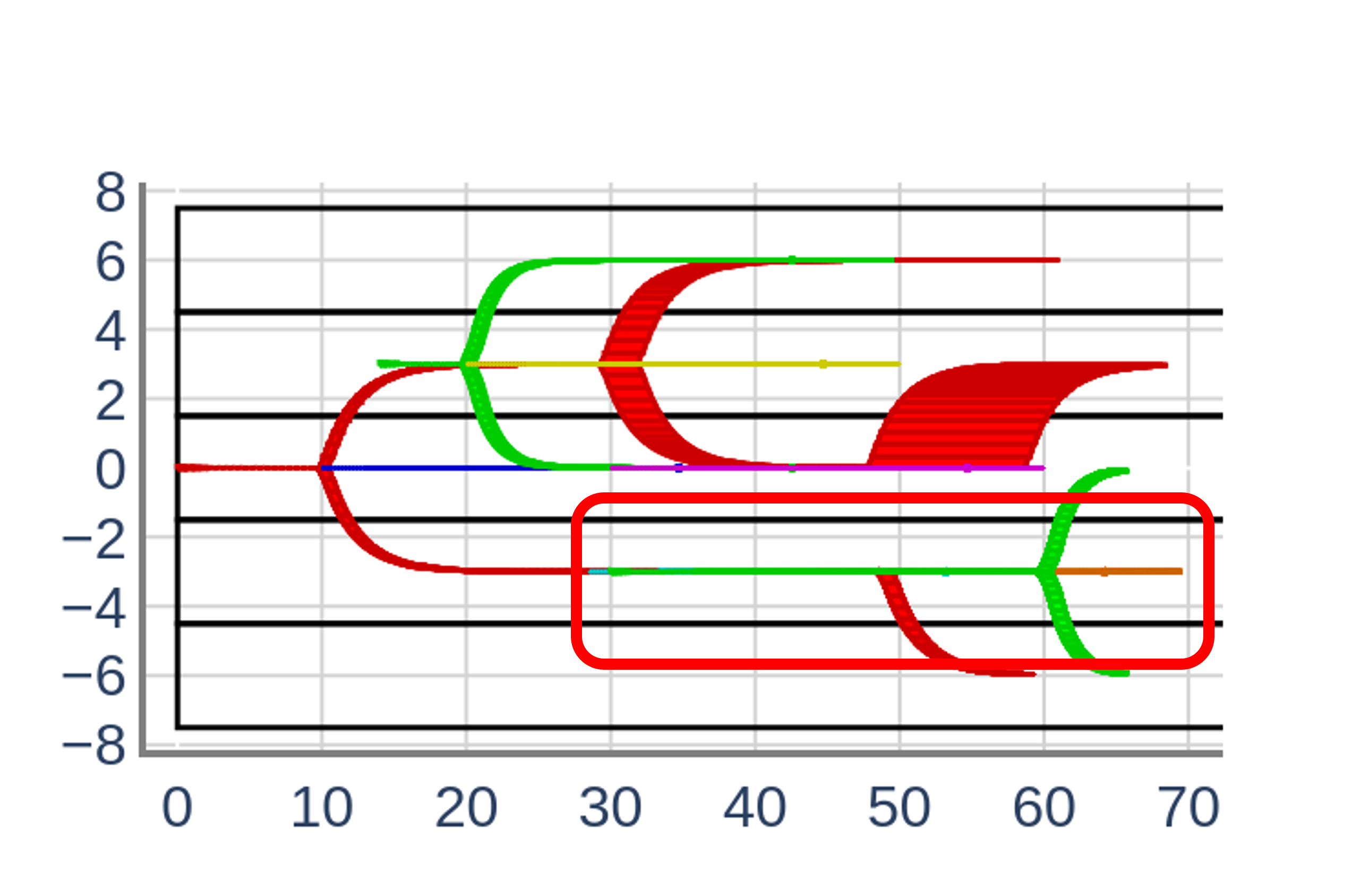}
    \caption{Simulation (left) and reachtube plot (right) for scenario after changing controller in incremental verification experiment.}
    \label{fig:exp:inc_ctrl}
\end{figure}

\end{document}